\crefname{section}{Sec.}{Sects.}
\newcommand{\ce}{\coloneqq}
\newcommand{\ec}{\eqqcolon}
\newcommand{\N}{\mathbb{N}}
\newcommand{\R}{\mathbb{R}}
\newcommand{\MI}{\mathcal{I}}
\newcommand{\MO}{\mathcal{O}}
\newcommand{\ee}{\mathrm{e}}
\newcommand{\diff}{\mathop{}\!\mathrm{d}}
\newcommand{\E}{\mathbf{E}}
\newcommand{\1}{\mathbf{1}}
\newcommand{\bfr}{\mathbf{r}}
\newcommand{\set}[2][]{#1\{{#2}#1\}}
\newcommand{\abs}[2][]{#1\vert{#2}#1\vert}
\newcommand{\dist}{\mathrm{dist}}
\theoremstyle{plain}
\newtheorem{theorem}{Theorem}
\newtheorem{lemma}[theorem]{Lemma}
\theoremstyle{definition}
\newtheorem{remark}[theorem]{Remark}
\newcommand{\FUBaffiliation}{\affiliation{Freie Universität Berlin, Institute of Mathematics, Arnimallee 6, 14195 Berlin, Germany}}
\newcommand{\BTUaffiliation}{\affiliation{Brandenburgische Technische Universität Cottbus-Senftenberg, Institute of Mathematics, Konrad-Wachsmann-Allee 1, 03046 Cottbus, Germany}}
\begin{document}

\title{A fast and rigorous numerical tool to measure length-scale artifacts in molecular simulations}

\author{Benedikt M. Reible}
\email{benedikt.reible@fu-berlin.de}
\FUBaffiliation

\author{Nils Liebreich}
\email{nils.liebreich@gmx.de}
\FUBaffiliation

\author{Carsten Hartmann}
\email{hartmanc@b-tu.de}
\BTUaffiliation

\author{Luigi Delle Site}
\email{luigi.dellesite@fu-berlin.de}
\FUBaffiliation

\begin{abstract}
    The two-sided Bogoliubov inequality for classical and quantum many-body systems is a theorem that provides rigorous bounds on the free-energy cost of partitioning a given system into two or more independent subsystems. This theorem motivates the definition of a quality factor which directly quantifies the degree of statistical-mechanical consistency achieved by a given simulation box size. A major technical merit of the theorem is that, for systems with two-body interactions and a known radial distribution function, the quality factor can be computed by evaluating just two six-dimensional integrals. In this work, we present a numerical algorithm for computing the quality factor and demonstrate its consistency with respect to results in the literature obtained from simulations performed at different box sizes.
\end{abstract}

\maketitle

\section{introduction}

The method of molecular simulation has undoubtedly been highly successful in the study of complex molecular systems \cite{frenkel,tuckbooknew}, yet some fundamental questions remain open. For both technical and conceptual reasons, the optimal choice of the system's size is a major concern in any simulation: it should be large enough in order for the computational representation of the system to reflect physical reality closely, but also small enough to avoid high computational costs associated with large simulations. The inability of a simulation to capture the key physical features of a realistic system fully due to a limited system size is referred to as \textit{finite-size effects} \cite{salacuse}. In first instance, the use of periodic boundary conditions in molecular simulations alleviates, in part, the problem of finite-size effects. However, if the size of the unit cell is not sufficient to represent the essential features of the bulk of a substance, then its numerical representation as a collection of copies of the unit cell interacting with each other may even amplify the artificial character of the results: since an individual cell does not faithfully represent the local features of the true systems, also the interaction between different cells is not realistic even at a larger scale beyond the unit cell. A discussion of the methods and techniques for handling the problem of finite-size effects in the field of molecular simulation can be found in Ref. \cite{advpx} and the references therein.

In the present work, we will treat an alternative approach to standard techniques. This method has been developed by some of us in recent years and is based on first principles of statistical mechanics and, in particular, the free energy $F$ as the central quantity. The latter forms the bridge between the microscopic particle ensemble and macroscopic observables. Specifically, it regulates the system's behavior and drives the first-principle derivation of any thermodynamic property \cite[p. 48]{LandauLifshitz5}, \cite[pp. 22 f.]{Huang1991}. The corresponding method for determining the optimal size of a simulation box is based on computing upper and lower bounds for the free energy cost $\Delta F$ associated with the separation of a large system into two (or more) independent subsystems, and it is expressed in a rigorous theorem, the \textit{two-sided Bogoliubov inequality} \cite{jstat, lmp}. The quantity $\Delta F$ corresponds to the \textit{interface energy} when an ideal surface divides the system into two independent parts, and hence the crucial observation is the following: if the interface energy can be neglected compared to some reference energy of the system (e.g., the total potential energy), then it follows that the smaller subsystem still captures the features of the bulk of the substance, thus the size of the original system is certainly sufficient for a satisfactory representation of the bulk. Studies of prototype systems such as interacting quantum gases have shown the validity of the approach \cite{prr,pra}.

For systems characterized by two-body potentials and documented radial distribution functions (e.g., from numerical data), calculating the upper and lower bound for the interface energy $\Delta F$ is enormously simplified as this task reduces to the straightforward numerical evaluation of six dimensional integrals. In this study, we will implement the numerical procedure for the calculation of such integrals and apply it to systems of Lennard-Jones particles. Such systems have been treated in the literature, and their finite size-effects have been assessed by expensive simulations performed at different sizes of the simulation box. We will show that the results of our approach lead to the same conclusions as those based on the simulation study; such a validation qualifies our method as numerically efficient and physically rigorous.

The paper is organized as it follows. In \cref{sec:twoSidedBogoliubov} we will introduce the relevant theoretical background on the two-sided Bogoliubov inequality and the finite-size effects criterion based on it. \cref{sec:twoBodyPotentials} will discuss the special case of systems with two-body potentials and the corresponding simplifications in the criterion; in particular, the integrals that have to be evaluated for it will be given. In \cref{sec:numericalMethods}, we shall introduce four different numerical methods for evaluating these integrals. Finally, in \cref{sec:system&Results} we will discuss a particular physical system from the literature on which we will test our finite-size effects criterion, showing various numerical data obtained via the four integration methods to substantiate its effectiveness.

\section{Two-sided Bogoliubov inequality and quality factor}\label{sec:twoSidedBogoliubov}

\subsection{Two-sided Bogoliubov inequality}

The two-sided Bogoliubov inequality gives an upper and lower bound for the interface free energy $\Delta F$, that is, the cost of partitioning a system of particles into two (or more) non-interacting subsystems; for simplicity, we only discuss the case of two subsystems which is the most relevant one.

Specifically, we consider $M$ particles confined to a spatial region $\Omega \subset \R^3$, described by a probability density function $f$ (or: density operator in the quantum-mechanical case). Suppose that $\Omega$ is divided into two disjoint subregions $\Omega_1, \Omega_2 \subset \Omega$, with $s$ and $M - s$ particles and probability densities $f_1, f_2$, respectively. Furthermore, assume that the full system is described by a Hamiltonian (function in the classical case; operator in the quantum case) of the form $H = H_0 + U$, where $H_0 = H_1 + H_2$ is the Hamiltonian for the two independent subsystems, and $U$ governs the interaction between $\Omega_1$ and $\Omega_2$. In thermal equilibrium at inverse temperature $\beta$, the full system is described by the density $f = Z^{-1} \, \ee^{- \beta H}$ with $Z = \int_{\Omega_1} \int_{\Omega_2} \ee^{- \beta H} \diff \bfr^\prime \diff \bfr$ (trace in the quantum case), and the two independent subsystems are described by the joint probability density $f_0 = f_1 \cdot f_2$ (tensor product in the quantum case), where $f_i = Z_i^{-1} \, \ee^{-\beta H_i}$ with $Z_i = \int_{\Omega_i} \ee^{- \beta H_i} \diff \bfr$, $i \in \set{1, 2}$. The interface free energy $\Delta F$ is now defined as the relative free energy between $f$ and $f_0$: 
\begin{equation*}
    \Delta F \ce -\beta^{-1} \log \left(\frac{Z}{Z_0}\right) \, .
\end{equation*}

Computing $\Delta F$ by traditional free energy calculation methods such as thermodynamic perturbation or particle insertion can be cumbersome, which warrants computationally efficient, yet precise estimates of $\Delta F$. An upper and a lower bound for $\Delta F$ is expressed by the following theorem, proved for classical systems in \cite{jstat} and for quantum systems in \cite{lmp}.

\begin{theorem}[Two-sided Bogoliubov inequality]\label{thm:bogoliubov} 
    It holds that
    \begin{align}\label{eq:bogoliubov}
        \E_{f}[U] \le \Delta F \le \E_{f_1, f_2}[U] \ .
    \end{align}
\end{theorem}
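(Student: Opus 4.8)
The plan is to obtain both inequalities in \cref{eq:bogoliubov} as two applications of a single variational principle for the free energy, the \emph{Gibbs variational principle} (also called the Bogoliubov variational principle). In the form I would use it, it states that for every admissible state $g$ --- a probability density in the classical case, a density operator in the quantum case --- one has $\mathcal{F}[g] \ce \E_g[H] - \beta^{-1} S[g] \ge -\beta^{-1}\log Z$, with equality if and only if $g$ equals the Gibbs state $f = Z^{-1}\ee^{-\beta H}$; here $S[g] = -\E_g[\log g]$ is the Gibbs (respectively von Neumann) entropy, and the gap is exactly $\mathcal{F}[g] + \beta^{-1}\log Z = \beta^{-1} D(g\,\|\,f)$ in terms of the relative entropy $D$. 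Non-negativity of $D$ is an immediate consequence of Jensen's inequality in the classical case and of Klein's inequality in the quantum case; it is the only non-elementary input. The same statement applied to the unperturbed system gives $\mathcal{F}_0[g] \ce \E_g[H_0] - \beta^{-1} S[g] \ge -\beta^{-1}\log Z_0$, with equality if and only if $g = f_0 = f_1 \cdot f_2$.

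For the upper bound I would use $f_0$ as a trial state in $\mathcal{F}$. Since $H = H_0 + U$ and the two functionals differ only through the Hamiltonian, $\mathcal{F}[f_0] = \mathcal{F}_0[f_0] + \E_{f_0}[U] = -\beta^{-1}\log Z_0 + \E_{f_1,f_2}[U]$, where I used that $f_0$ is the exact minimizer of $\mathcal{F}_0$ and that $\E_{f_0}[U] = \E_{f_1,f_2}[U]$. The variational inequality $-\beta^{-1}\log Z \le \mathcal{F}[f_0]$ then rearranges to $-\beta^{-1}\log(Z/Z_0) \le \E_{f_1,f_2}[U]$, i.e.\ $\Delta F \le \E_{f_1,f_2}[U]$. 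Symmetrically, for the lower bound I would use $f$ as a trial state in $\mathcal{F}_0$: writing $H_0 = H - U$ gives $\mathcal{F}_0[f] = \mathcal{F}[f] - \E_f[U] = -\beta^{-1}\log Z - \E_f[U]$, because $f$ is the exact minimizer of $\mathcal{F}$, and the variational inequality $-\beta^{-1}\log Z_0 \le \mathcal{F}_0[f]$ rearranges to $\E_f[U] \le -\beta^{-1}\log(Z/Z_0) = \Delta F$. Together these give \cref{eq:bogoliubov}.

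In the classical case one can also bypass the entropy functional and argue with Jensen's inequality directly: since $\ee^{-\beta H} = \ee^{-\beta H_0}\ee^{-\beta U}$ pointwise, $Z/Z_0 = \E_{f_0}[\ee^{-\beta U}] \ge \ee^{-\beta\E_{f_0}[U]}$ by convexity of $t \mapsto \ee^{-\beta t}$, which is the upper bound, and likewise $Z_0/Z = \E_f[\ee^{\beta U}] \ge \ee^{\beta\E_f[U]}$, which is the lower bound. This shortcut breaks in the quantum case because $H_0$ and $U$ need not commute, so $\ee^{-\beta H} \neq \ee^{-\beta H_0}\ee^{-\beta U}$; there the two inequalities are instances of the Peierls--Bogoliubov inequality, whose proof again reduces to convexity of $A \mapsto \log \tr\,\ee^{A}$ (equivalently to Klein's inequality), so it is really the same variational idea in disguise.

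I expect the main work to be analytic rather than algebraic: the rearrangements above are one line each once the variational principle is available, but to state the theorem cleanly one must ensure that $Z$ and $Z_0$ are finite, that $U$ is controlled relative to $H_0$ so that the entropies $S[f]$, $S[f_0]$ and the expectations $\E_f[U]$, $\E_{f_1,f_2}[U]$ are well defined and finite, and --- in the quantum case --- that the operator domains are such that Klein's and the Peierls--Bogoliubov inequality genuinely apply. These points are precisely what is established in Refs.\ \cite{jstat, lmp}, so here I would simply invoke them.
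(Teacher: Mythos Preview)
Your argument is correct and is the standard route to \cref{eq:bogoliubov}: two applications of the Gibbs--Bogoliubov variational principle (equivalently, non-negativity of relative entropy / Klein's inequality), one with $f_0$ as trial state for $H$ and one with $f$ as trial state for $H_0$. Note, however, that the paper does not actually supply a proof of \cref{thm:bogoliubov}; it only states the result and defers to Refs.~\cite{jstat,lmp} for the classical and quantum cases respectively. Your sketch is precisely the argument carried out in those references, and you correctly anticipate that the substantive content there is the analytic housekeeping (finiteness of $Z$, $Z_0$, domain issues in the quantum setting) rather than the two-line variational rearrangement.
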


The quantities $\E_{f}[U]$ and $\E_{f_1, f_2}[U]$ denote the expectations of the potential $U$ with respect to the probability density functions $f$ or $f_1 \cdot f_2$. The link of the free energy bounds to a criterion for evaluating the physical consistency of a simulation size will be described next.

\subsection{Quality factor}\label{subsec:qualityFactor}

The physical consistency (or thermodynamic accuracy, see below) of a simulation with a given size, which is supposed to model the bulk of a system, can be quantified in terms of a quality factor $q$. This quantity measures the free energy cost $\Delta F$ and its proportionality relation to some characteristic reference energy $E_\mathrm{ref}$ of the system:
\begin{equation*}
    q \ce \frac{\abs{\Delta F}}{\abs{E_\mathrm{ref}}} \ .
\end{equation*}
In this paper, $E_\mathrm{ref}$ is chosen to be the total potential energy of the studied system, see \cref{eq:etot} below. As mentioned before, computing $\Delta F$ is not straightforward. However, one can use Theorem \ref{thm:bogoliubov} to introduce the following worst-case approximation for the quality factor $q$:
\begin{equation}\label{eq:qmax}
    q_\mathrm{max} \ce \frac{\max \bigl\{ \abs[\big]{\E_{f}[U]}, \abs[\big]{\E_{f_{1},f_{2}}[U]} \bigr\}}{\abs{E_\mathrm{ref}}} \ .
\end{equation}
Note that $q \le q_\mathrm{max}$ by virtue of Eq. \eqref{eq:bogoliubov}. We also define the quantity
\begin{equation}\label{eq:qmin}
    q_\mathrm{min} \ce \frac{\min \bigl\{ \abs[\big]{\E_{f}[U]}, \abs[\big]{\E_{f_{1},f_{2}}[U]} \bigr\}}{\abs{E_\mathrm{ref}}} \ 
\end{equation}
which is, in general, not a lower bound for the actual quality factor $q$. (However, if the upper and lower bound for $\Delta F$ have the same sign, then it follows that $q_\mathrm{min} \le q$.) The quantities $q_\mathrm{min}$ and $q_\mathrm{max}$ together define a corridor of \emph{reasonable} values for $q$ though, with the understanding that $q$ might even be smaller than $q_\mathrm{min}$, see Remark \ref{rem:corridor} below.

With the help of the above quantities, the finite-size effects criterion described in the introduction can now be formulated as follows: \emph{if the quality factor $q$ is small for a given size of $\Omega$, then finite-size effects are negligible.} Since determining the quantity $q$ exactly requires knowledge of the interface energy $\Delta F$ which is typically not available, one can compute $q_\mathrm{min}$ and $q_\mathrm{max}$ instead which is a much simpler task. Small values of $q_\mathrm{max}$ imply that $\Delta F$ is small compared to $E_\mathrm{ref}$, hence the characteristic features of the bulk still persist in each of the two subsystems. In this case, one can then draw the strong and rigorous conclusion that the size of the initial total system is certainly sufficient to represent the bulk of the substance.

\begin{remark}\label{rem:corridor}
    \leavevmode
    \begin{enumerate}[wide=\parindent, label=(\arabic*)]
        \item Since a small value of $q$ implies negligible finite-size effects, it is not a problem, from a practical point of view, if the actual value of $q$ is smaller than the approximation $q_\mathrm{min}$, because if the latter and additionally $q_\mathrm{max}$ are small, one can be certain that $q$ must be \emph{at least} as small as well. In Appendix \ref{app:corridor}, we discuss a practically relevant special case, applying in particular to the present study, in which $q_\mathrm{min}$ is in fact a true lower bound for $q$.

        \item It has to be noted that a small value of $q$ is only a sufficient but not a necessary criterion for negligible finite-size effects. Indeed, while a small value of $q$ (or its approximations $q_\mathrm{min}$ and $q_\mathrm{max}$) guarantees a sufficient system size, one cannot conclude from a large $q$-value that the size is definitely insufficient, as there might be other technical tricks in simulation to amend for finite-size corrections, e.g., inclusion of reaction fields \cite{reaction}, which may not be included in $q$ as defined here.
    \end{enumerate}
\end{remark}

In previous work \cite{jstat,lmp,prr,pra,molphys,advpx}, the criterion associated with the quality factor $q$ has been indicated as \textit{thermodynamic consistency} due to the fact that the free energy corresponding to a chosen size is the key quantity for determining the thermodynamics of the system. The novelty of such a view of consistency, compared to previous approaches, is discussed in the next section.

\subsection{Novelty compared to previous approaches: Bulk response and system fluctuations}

Compared to other approaches which are mostly based on structure corrections and static thermodynamic extrapolations, the factor $q$ and its upper bound $q_\mathrm{max}$ carry information about the system's response to a thermodynamic perturbation, and thus to the thermodynamic fluctuations of the system \cite{pra,molphys,advpx}. (The creation of an interface that divides a system in independent subsystems is, in essence, a concept similar to the Widom particle insertion in a standard liquid \cite{wid}, or to the Zwanzig free energy perturbation in an alchemical transformation \cite{zwanz}.) Therefore, the related free energy differences/fluctuations describe how the system reacts to a perturbation. In determining a simulation size that reproduces key features of a bulk liquid, a criterion based solely on structure consistency and on static quantities such as the total energy per particle does not necessarily assure, for example, that relevant thermodynamic quantities, like the chemical potential, are as accurate as other quantities used as a reference. (The chemical potential, for instance, is related to the response of the free energy as the number of particles changes.) The criterion based on the factor $q$, however, allows to draw conclusions directly about the accuracy of physical quantities such as the chemical potential. In particular, the criterion is rigorous in the sense that if one chooses a system size where $q_\mathrm{max}$ is small (e.g., around $10 \, \%$), then one can be sure that the error for thermodynamic quantities is at most as high as this as well.

Note that since $q_\mathrm{max}$ is an upper bound for the actual quality factor $q$, the finite-size criterion based on it must be used in a complementary manner to other criteria; in other words, one does not expect that if criteria based on other quantities show convergence with high accuracy, the $q$-criterion would provide results indicating the complete opposite. Instead, one should expect that the factor $q$ provides information for a possible refinement of the system size around a value obtained through the convergence of other quantities.

\begin{remark}
   As a side note, to highlight the overall relevance of the concept of physically consistent minimal size of a system, it may be illuminating to trace back the question that generated the need for \cref{thm:bogoliubov}. In the study of classical and quantum many-particle systems, the treatment of open systems in contact with a reservoir is becoming increasingly important. If one considers a system that is too small for statistical (canonical or grand canonical) consistency, then several sampling artifacts can arise due to the artificial suppression of fluctuations. As a consequence, one ends up with a misunderstanding rather than an understanding of the underlying physics; see the related discussions in Ref. \cite{advpx} as well as in Refs. \cite{ana,apq,prelind} for quantum systems. 
\end{remark}

\section{Quality factor for systems with two-body interactions}\label{sec:twoBodyPotentials}

In molecular simulations, most of the interaction potentials in use are two-body potentials depending only on the interparticle distance. In such a case, the quantities involved in \cref{thm:bogoliubov} can be reduced to the calculation of one-particle and two-particle integrals \cite{prr}:
\begin{equation}\label{eq:lowerBound}
    \E_f[U] = \rho^2 \int_{\Omega_1} \int_{\Omega_2} U (\bfr - \bfr^\prime) g(\bfr, \bfr^\prime) \diff \bfr^\prime \diff \bfr
\end{equation}
and
\begin{equation}\label{eq:upperBound}
    \E_{f_1,f_2}[U] = \int_{\Omega_1} \int_{\Omega_2} \rho_1(\bfr) \rho_2(\bfr^\prime) U(\bfr - \bfr^\prime) \, \1_{\set{\abs{x - x^\prime} \ge \sigma}} \diff \bfr^\prime \diff \bfr \ ,
\end{equation}
where $\bfr \in \Omega_1$ and $\bfr^\prime \in \Omega_2$, $\rho_{1}(\bfr)$ and $\rho_{2}(\bfr^\prime)$ are the three-dimensional particle densities in each domain, $g(\bfr, \bfr^\prime)$ is the particle-particle radial distribution function, and $\rho = M / \abs{\Omega}$ is the average particle number density. Moreover, the symbol $\1_{\set{\abs{x - x^\prime} \ge \sigma}}$ denotes the indicator function of the set $\set{(\bfr, \bfr^\prime) \in \Omega_1 \times \Omega_2 : \abs{x - x^\prime} \ge \sigma}$, with $\abs{x - x^\prime}$ being the Euclidean distance between two particles along the direction perpendicular to the surface (i.e., in the $yz$-plane) that separates the system into subsystems.

\begin{remark}
    \leavevmode
    \begin{enumerate}[wide=\parindent, label=(\arabic*)]
        \item The condition $\abs{x - x^\prime} \ge \sigma$ corresponds to a short-distance cutoff in the particle-particle interactions across the interface, defining a corridor that divides the system into two disjoint subsystems; it is included to avoid any possible singularity in the potential (see also Ref. \cite{advpx} for further discussions) since, in principle, particles in different domains can come arbitrarily close to each other along the direction perpendicular to the interface. The condition is very general and applies to any possible potential, but it can actually be defined in a less strong manner, e.g., as a condition on the standard distance between particles, when the potential depends only on the distance between particles, as will be the case later on in this work.

        \item It should be noted that in Ref. \cite{prr}, the quantity $\E_f[U]$ appearing here in \cref{eq:lowerBound} is multiplied by an additional factor of 2. This is due to the fact that the formulation in Ref. \cite{prr} is very general and, in particular, does not assume the two-body potential $U (\bfr -\bfr^\prime)$ to be symmetric. However, in follow-up publications \cite{pra, molphys, advpx} the potential was assumed to be a function of the interparticle distance only and thus symmetric, hence the factor 2 is not included.
    \end{enumerate}
\end{remark}

The most convenient choice for the reference energy scale $E_\mathrm{ref}$ in the present context is the average total potential energy $\E[U_\mathrm{tot}]$ of the system which is given by \cite[Eq. (4.7.42)]{tuckbooknew}
\begin{equation}\label{eq:etot}
    \E[U_\mathrm{tot}] = \frac{\rho^2}{2} \int_{\Omega} \int_{\Omega} U(\bfr - \bfr^\prime) g(\bfr, \bfr^\prime) \diff \bfr^\prime \diff \bfr \ ,
\end{equation}
where the notation $U_\mathrm{tot}$ is used to indicate that the interaction is considered between all particles of the entire region $\Omega$, not just between $\Omega_1$ and $\Omega_2$. For a uniform particle density and bounded $\Omega$, which will be assumed in the analysis below, \cref{eq:upperBound} simplifies to
\begin{equation}\label{eq:simplifiedUpperBound}
    \E_{f_1,f_2}[U] = \rho^{2} \int_{\Omega_1} \int_{\Omega_2} U (\bfr - \bfr^\prime) \, \1_{\set{\abs{x - x^\prime} \ge \sigma}} \diff \bfr^\prime \diff \bfr \ .
\end{equation}
The parameter $\rho$ is decided by the simulator, and if the radial distribution function $g(\bfr, \bfr^\prime)$ is known (either experimentally or by numerical simulations), all the quantities relevant for determining the quality factors $q_\mathrm{max}$ and $q_\mathrm{min}$ can be calculated numerically via six-dimensional integration. This is the main contribution of this paper together with the corresponding numerical validation by direct comparison with simulation studies. In the next section, we shall describe the numerical scheme of the calculation in detail.

\section{Technical details of the numerical methods}\label{sec:numericalMethods}

There are many different algorithms for numerical integration, each with its own trade-offs between accuracy, speed, and complexity. We have explored four different techniques to be assured that all of them converge to the same result in order to validate the theoretical principle discussed in \cref{subsec:qualityFactor} as a solid and rigorous criterion for estimating finite-size effects. All of the four methods are computationally rather cheap: for a system of 500 Lennard-Jones particles, they yield accurate results in a time of the order of a few minutes on standard machines. The important implication is that such an approach (in any of the four different numerical integration schemes) can be used routinely before setting up any simulation to be assured of the accuracy of the corresponding calculation.

The four numerical methods explored below are: (A) the \enquote{Riemann method}, where one discretizes space such that the integral can be written approximately as a sum of values distributed on a grid; (B) an \enquote{improved Riemann method} which utilizes the fact that our integrands depend on the interparticle distance only, allowing to reduce redundant calculations; (C) a \enquote{probability method}, which is an integration scheme based on probabilistic considerations using the distribution of distances between pairs of points in a cuboid, thereby reducing the sought-after integrals to simple one-dimensional ones; and finally (D) the classical Monte Carlo method, which has the advantage of reducing the \enquote{curse of dimensionality} that affects the Riemann method, but is limited by poor convergence in case of an insufficient sample of points.

\subsection{Riemann method}

To begin with, we consider a function $f : [a, b] \to \R$ of a single variable. As is well-known, the Riemann sum of $f$ approximates the signed area $A$ between the graph of $f$ and the abscissa by $n \in \N$ rectangles of fixed width $\Delta x = (b - a) / n$ and varying height $f(x_i)$, where $x_i = a + i \cdot (b - a) / n$, $i \in \set{0, \dotsc, n-1}$, are the edges of the rectangles \cite[Sec. 6.5.1]{Plato2023}:
\begin{equation*}
    A = \int_{a}^{b} f(x) \diff x \approx \sum_{i=0}^{n} f(x_i) \, \Delta x \ .
\end{equation*}
This method, where the function $f$ is evaluated at the left side of each subinterval $[x_i, x_{i+1}]$ is called the left rule. The accuracy of the approximation depends on the width $\Delta x$ of the rectangles and thereby on the number of points $n$; the error for the left rule is linear in $\Delta x$, meaning it converges on the order of $\MO(n^{-1})$ for $n \to \infty$ \cite[Thm. 6.6]{Plato2023}. Adapting the method to higher dimensions is a natural extension of the one-dimensional concept: instead of dividing an interval into smaller subintervals, one partitions a multidimensional volume $\Omega \subset \R^d$ into smaller, hyperrectangular subvolumes. The integral of $f$ over $\Omega$ is then approximated by summing the values of $f$ at chosen points from each subvolume, multiplied by the size (area, volume, etc.) of that subvolume.

To apply the Riemann method to our concrete problem, it has to be extended to six dimensions (three for each subregion $\Omega_1 \subset \R^3$ and $\Omega_2 \subset \R^3$); this is straightforward as described above: for each dimension, one considers $n$ rectangles of width $\Delta x$ (respectively, $\Delta y$, $\Delta z$, $\Delta u$, $\Delta v$, $\Delta w$) and defines edges $x_{i_1}$ (respectively, $y_{i_2}$, $z_{i_3}$, $u_{i_4}$, $v_{i_5}$, $w_{i_6}$), $i_1, \dotsc, i_6 \in \set{0, \dotsc, n-1}$, such that the expectations  from Eqs. \eqref{eq:lowerBound} and \eqref{eq:simplifiedUpperBound} can be approximated by the sums
\begin{align*}
    \E_f[U] \approx \rho^2 \sum_{i_1, \dotsc, i_6 = 0}^{n-1} &\biggl[U \left(\sqrt{(x_{i_1} - u_{i_4})^2 + (y_{i_2} - v_{i_5})^2 + (z_{i_3} - w_{i_6})^2}\right) \\
    &\quad \times g \left(\sqrt{(x_{i_1} - u_{i_4})^2 + (y_{i_2} - v_{i_5})^2 + (z_{i_3} - w_{i_6})^2}\right) \cdot \Delta \Omega_{1,2}\biggr]
\end{align*}
and
\begin{align*}
    \E_{f_1,f_2}[U] \approx \rho^2 \sum_{i_1, \dotsc, i_6 = 0}^{n-1} &\biggl[U \left(\sqrt{(x_{i_1} - u_{i_4})^2 + (y_{i_2} - v_{i_5})^2 + (z_{i_3} - w_{i_6})^2}\right) \\
    &\quad \times \1_{\set{\abs{x_{i_1} - u_{i_4}} \ge \sigma}} \, \Delta \Omega_{1,2}\biggr] \ ,
\end{align*}
where $\Delta \Omega_{1,2} = \Delta x \cdot \Delta y \cdot \Delta z \cdot \Delta u \cdot \Delta v \cdot \Delta w$. Note that we have six sums, each over a single dimension, and that the number $n$ is the common discretization step for all dimensions. To obtain a corresponding approximation for the average total potential energy \eqref{eq:etot}, observe that one has a double integral in the full region $\Omega$, thus the coordinates span the entire domain twice, differently from the coordinates of the integrals above; to make this point clear, we indicate them as $\hat{x}$ (and analogously the other coordinates). We then have
\begin{align*}
    \E[U_\mathrm{tot}] \approx \frac{\rho^2}{2} \sum_{i_1, \dotsc, i_6 = 0}^{n-1} &\biggl[U \left(\sqrt{(\hat{x}_{i_1} - \hat{u}_{i_4})^2 + (\hat{y}_{i_2} - \hat{v}_{i_5})^2 + (\hat{z}_{i_3} - \hat{w}_{i_6})^2}\right)\\
    &\quad \times g \left(\sqrt{(\hat{x}_{i_1} - \hat{u}_{i_4})^2 + (\hat{y}_{i_2} - \hat{v}_{i_5})^2 + (\hat{z}_{i_3} - \hat{w}_{i_6})^2}\right) \cdot \Delta \Omega \biggr]
\end{align*}
with $\Delta \Omega = \Delta \hat{x} \cdot \Delta \hat{y} \cdot \Delta \hat{z} \cdot \Delta \hat{u} \cdot \Delta \hat{v} \cdot \Delta \hat{w}$. Since the total number of points $N$ at which the integrand has to be evaluated is of order $\mathcal{O}\left(n^6\right)$, the convergence is reduced to the order of $\MO(N^{-1/6})$ compared to the one-dimensional case \cite{Kuo2005}; this is known as the curse of dimensionality, as the computational cost for numerical integration grows exponentially with the number of dimensions. This is the primary reason for not using the Riemann method to approximate high-dimensional integrals.

To improve the convergence, one can exploit symmetries of the integrand. In our specific problem, the integrand only depends on the distance between all pairs of points. As there are many combinations of grid points having the same pairwise distance, there are many redundant evaluations of the functions $U$ and $g$. An improvement of the Riemann method that removes these redundant evaluations is described in the next section.

\subsection{Improved Riemann method}

To remove redundant operations in the Riemann method, we need to determine all the different distances that can occur for pairs of points in a cube and count the number of combinations of grid points that realize each of them. We shall defer the derivation to Appendix \ref{app:improvedRiemann} and present here only the result: instead of summing over the index set $\{(i_1, \dotsc, i_6) \, : \, 0\le i_1, \dotsc, i_6 \le n-1\}$ as in the formulas stated the previous section, it suffices to iterate over the set
\begin{align*}
    \MI = \Bigl\{(i_1, i_2, j_1, j_2, k_1, k_2) \ : \ &(i_1 = 0 \lor i_2 = 0) \land (j_1 = 0 \lor j_2 = 0) \land (k_1 = 0 \lor k_2 = 0), \\
    &\ 0 \le i_1, i_2, j_1, j_2, k_1, k_2 \le n-1 \Bigr\}
\end{align*}
to cover all distinct distances between the cubes $\Omega_1$ and $\Omega_2$. Moreover, the number of pairs of points that realize each distinct distance is equal to
\begin{equation*}
    C(i_1, i_2, j_1, j_2, k_1, k_2) = \bigl(n - \abs{i_1 - i_2}\bigr) \cdot \bigl(n - \abs{j_1 - j_2}\bigr) \cdot \bigl(n - \abs{k_1 - k_2}\bigr) \ .
\end{equation*}
Writing
\begin{equation*}
    d_I \ce \dist (r_{i_1, j_1, k_1}, r_{i_2, j_2, k_2}') \quad \text{for} \quad I = (i_1, i_2, j_1, j_2, k_1, k_2) \in \MI
\end{equation*}
and $r_{i_1, j_1, k_1} \in \Omega_1, r_{i_2,j_2, k_2}' \in \Omega_2$ (respectively, $r_{i_1, j_1, k_1}, r_{i_2,j_2, k_2}' \in \Omega$), it follows that the expectations  \eqref{eq:lowerBound}, \eqref{eq:etot} and \eqref{eq:simplifiedUpperBound} can be approximated by the sums
\begin{align*}
    \E_f[U] &\approx \rho^2 \sum_{I \in \MI} U (d_I) \, g(d_I) \, C(I) \, \Delta\Omega_{1,2} \ ,\\
    \E_{f_1, f_2}[U] &\approx \rho^2 \sum_{I \in \MI} U(d_I) \, \1_{\{d_I \ge \sigma\}} \, C(I) \, \Delta\Omega_{1,2} \ ,\\
    \E[U_\mathrm{tot}] &\approx \frac{\rho^2}{2} \sum_{I \in \MI} U(d_I) \, g(d_I) \, C(I) \, \Delta\Omega \ .
\end{align*}
Note that for the purpose of numerical approximation, the condition $\abs{x - x^\prime} \ge \sigma$ in the expression for $\E_{f_1, f_2}[U]$ is replaced by $\dist(\bfr, \bfr') \ge \sigma$ to simplify the evaluation; this approximation does not lead to an underestimation of the upper bound because
\begin{equation}\label{eq:ineqIndicatorFunctions}
    \1_{\set{\abs{x - x^\prime} \ge \sigma}} \le \1_{\set{\abs{\bfr - \bfr^\prime} \ge \sigma}}
\end{equation}
in the integration region (since all pairs of points $\bfr, \bfr^\prime$ satisfying the first condition necessarily satisfy the second), hence the above approximation yields a larger upper bound.

Each of the above sums has $\mathrm{card}(\MI) = (2n - 1)^3$ terms, hence the computational complexity is in $\MO (n^3)$ which is a dimension reduction by a factor of 2 compared to the standard Riemann method. This leads to an effective convergence of order $\MO (N^{-1/3})$. Using the midpoint rule instead of the left rule, which for one-dimensional integrals has an improved convergence of order $\MO (n^{-2})$ \cite[Thm. 6.7]{Plato2023} and gives the same approximation for the integral in the present improved Riemann scheme, one can expect an effective convergence of order $\MO (N^{-2/3})$.

Despite the reduction of dimensions in the sum, the determination of the number of pairs that have the same distance is of course affected by the dimension of the problem. In order to avoid this dependence, two more complementary integration methods shall be presented in the next sections: the first approach, termed \enquote{probability method}, consists in substituting the problem of counting of pairs of points with the same distance by the probability distribution of particle-particle distances in a cube, which is available as an analytic formula in the literature; the corresponding integration problem is thereby reduced to a one-dimensional integral. In the subsequent section, Monte Carlo integration by random sampling of points is described, which is a well-known standard technique to evaluate multidimensional integrals that does not suffer from the curse of dimensionality.

\subsection{Probability method}\label{subsec:probabilityMethod}

Since $U$ and $g$ depend only on the relative distance between points, one can re-conceptualize the integration from a geometric problem to a probabilistic one. To illustrate this, let us consider a general six-dimensional integral of the form
\begin{equation*}
    J = \int_V f(x) \diff x \ ,
\end{equation*}
where $V \subset \R^6$ is a bounded region, $f : V \to \R$ is a real-valued function and $x = (x_1, \dotsc, x_6)$. We can rewrite this integral using the concept of expectation  of a function of a random variable: consider $x \in V$ to be the values of the continuous random variable $X : V \to V$, $x \mapsto x$, which is uniformly distributed over $V$, i.e., which has probability distribution
\begin{equation*}
    p_X(x) =
    \begin{cases}
        \frac{1}{\abs{V}} & \text{if $x \in V$} \ , \\
        0 & \text{otherwise} \ .
    \end{cases}
\end{equation*}
Then it follows that the integral $J$ can the be expressed as the volume $\abs{V}$ multiplied by the expectation  of the function $f(X)$ with respect to the distribution $p_X$:
\begin{equation}\label{eq:J}
    J = \abs{V} \cdot \mathbb{E}[f(X)]= \abs{V} \int_V f(x) p_X(x) \diff x \ .
\end{equation}

In our specific problem, the integrand involves the potential $U$ and the radial distribution function $g$, and hence it depends only on the scalar distance between two points and not on their specific six-dimensional coordinates. This means that one can write $f(x) = h\bigl(D(x)\bigr)$ for all $x \in V$, where $h : \R \to \R$, $h = U g$, is a function of a single variable and $D : V \to [0, + \infty)$ is the distance between two points, represented as a six-dimensional vector:
\begin{equation}\label{eq:sixDimDist}
    D(x) \ce \sqrt{(x_1 - x_4)^2 + (x_2 - x_5)^2 + (x_3 - x_6)^2} \ .
\end{equation}
The expectation  of $f$ therefore becomes $\mathbb{E}[f(X)] = \mathbb{E}[h(D(X))]$. Let us define a new one-dimensional random variable $D = D(X)$, which represents the distance between two randomly chosen points in their respective domains. This new variable $D$ has its own probability distribution $p_D(r)$ with the help of which the original six-dimensional integral $J$ can be reduced to a one-dimensional integral over the distance:
\begin{equation}\label{eq:probabilityMethod}
  J = \abs{V} \cdot \mathbb{E}[h(D)] = \abs{V} \int_0^\infty h(r) p_D(r) \diff r \ .
\end{equation}
For a generic shape of the integration region $V$, besides a spherical form, this identity is non-trivial; we therefore give a detailed measure-theoretic proof of \eqref{eq:probabilityMethod} in Appendix \ref{app:proofProbability}. It has to be emphasized that the distribution $p_D$ has nothing to do with the physical probability densities $f$ and $f_1 \cdot f_2$ introduced above; rather, it is a purely mathematical quantity related to the geometry of the integration region $V$.

We can apply \cref{eq:probabilityMethod} to our energy integrals (with $h = U g$) to obtain the following simplified equations: first, one has that $\E_f[U] = \rho^2 \, \abs{\Omega_1} \, \abs{\Omega_2} \, \mathbb{E}[U(D)g(D)]$, i.e.,
\begin{equation*}
    \E_f[U] = \rho^2 \, \abs{\Omega_1} \, \abs{\Omega_2} \int_0^{L\sqrt{3}} U(r) g(r) q_D(r) \diff r \ .
\end{equation*}
Second, we have $\E_{f_1, f_2}[U] = \rho^2 \, \abs{\Omega_1} \, \abs{\Omega_2} \, \mathbb{E}[U(D) \, \1_{\{D \ge \sigma\}}]$, that is,
\begin{equation*}
    \E_{f_1, f_2}[U] = \rho^2 \, \abs{\Omega_1} \, \abs{\Omega_2} \int_0^{L\sqrt{3}} U(r) \, \1_{\{r \ge \sigma\}} \, q_D(r) \diff r \ .
\end{equation*}
Here, $L > 0$ is the side length of the cube $\Omega$ and $q_D(r)$ is the probability distribution for the distance $D$, provided that one point is in $\Omega_1$ and the other point is in $\Omega_2$. Finally $\E[U_\mathrm{tot}] = \rho^2 \, \abs{\Omega}^2 \, \mathbb{E}[U(D)g(D)]$, i.e.,
\begin{equation*}
    \E[U_\mathrm{tot}] = \frac{1}{2} \, \rho^2 \, \abs{\Omega}^2 \int_0^{L\sqrt{3}} U(r) g(r) p_D(r) \diff r \ ,
\end{equation*}
where $p_D(r)$ is the probability distribution for the distance between each pair of points over the whole domain $\Omega$. The remaining one-dimensional integrals in the above formulas can be evaluated straightforwardly with any efficient Riemann-sum-based method.

The advantage of the method described here is that the probability density functions need to be calculated only once for each geometrical shape of the integration domain. (They can be scaled to fit different sizes of the same shape.) For a unit cube, $p_D$ has been calculated explicitly and is given in terms of a piecewise defined function \cite{Mathai1999, Zilinskas2003, Philip2007}; the extension we need for our case is the probability density $q_D(r)$ for the distance across two half cubes. Details about these functions are reported in Appendix \ref{app:probability}, and Fig. \ref{fig:prob} shows a plot of the two functions.

\begin{figure}
    \centering
    \scalebox{0.9}{\input{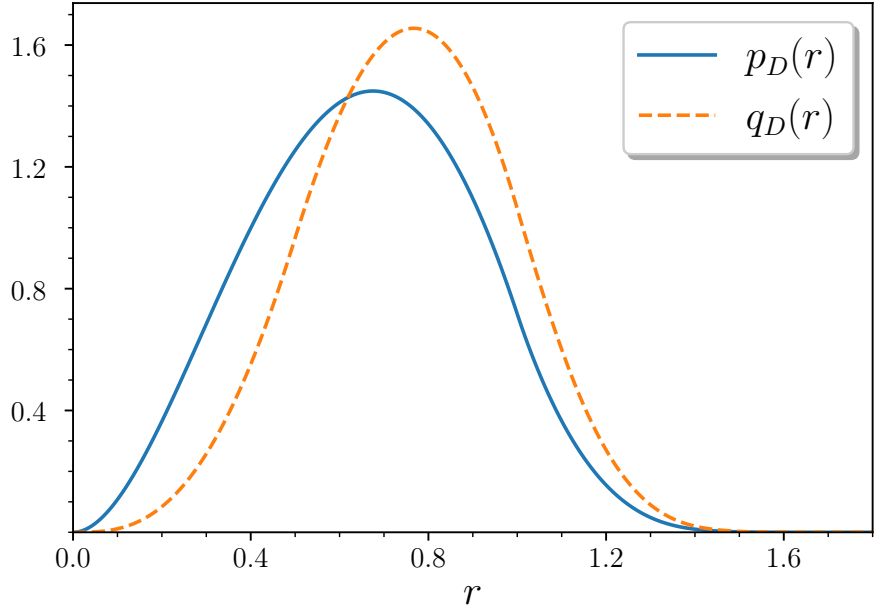}}
    \caption{Probability density functions $p_D(r)$ and $q_D(r)$ for the distance between two uniformly distributed points inside a unit cube (solid line), and between two uniformly distributed points in two halves of a unit cube (dotted line).}
    \label{fig:prob}
\end{figure}

\subsection{Monte Carlo method}

Unlike the deterministic Riemann method, Monte Carlo methods do not use a rigid grid but a random sampling of points from the integration domain. The conceptual justification is based on the law of large numbers, which states that the average value of a certain quantity calculated for a large number of independent and identically distributed random samples will converge to the true expected value. The curse of dimensionality is avoided because the number of samples required for a desired accuracy is independent of the number of dimensions.

The basic idea of the corresponding numerical technique is to treat the integral as the expectation  of a random variable. There are several complex sampling techniques \cite{MonteCarlo}; here we use the straightforward Monte Carlo integration, where the random samples are uniformly distributed over the whole integration region. The integral of a function $f$ is then calculated by uniformly sampling $N$ points from the integration region, evaluating the function $f$ in these points, and then averaging over the total number of samples \cite{MonteCarloFormula}:
\begin{equation*}
    \int_V f(x) \diff x \approx \frac{\abs{V}}{N} \sum_{i=1}^{N} f(x_i) \ec J_N \ ,
\end{equation*}
where the variance of the approximation $J_N$ is given by
\begin{equation}\label{MCvar}
    \mathrm{Var}(J_N) = \frac{\abs{V}^2}{N} \, \mathrm{Var} (f) \ .
\end{equation}

Using this Monte Carlo approximation, the integrals \eqref{eq:lowerBound}, \eqref{eq:etot}, \eqref{eq:simplifiedUpperBound} of interest in this study take the following form:
\begin{align*}
    \E_f[U] &\approx \frac{\rho^2 \, \abs{\Omega_1} \, \abs{\Omega_2}}{N} \, \sum_{i=1}^N U\bigl(\mathrm{dist}(r_i, r_i')\bigr) \, g\bigl(\mathrm{dist}(r_i, r_i')\bigr) \ ,\\
    \E_{f_1,f_2}[U] &\approx \frac{\rho^2 \, \abs{\Omega_1} \, \abs{\Omega_2}}{N} \, \sum_{i=1}^N U\bigl(\mathrm{dist}(r_i, r_i')\bigr) \, \1_{\{\mathrm{dist}(r_i, r_i') \ge \sigma\}} \ ,\\
    \E[U_\mathrm{tot}] &\approx \frac{\rho^2 \, \abs{\Omega}^2}{2N} \, \sum_{i=0}^N U\bigl(\mathrm{dist}(r_i, r_i')\bigr) \, g\bigl(\mathrm{dist}(r_i, r_i')\bigr) \ ,
\end{align*}
with $r_i \in \Omega_1, r_i' \in \Omega_2$ in the first two equations and $r_i, r_i' \in \Omega$ in the third equation.

The convergence of the straightforward Monte Carlo method is of order $\MO (N^{-1/2})$, thus it lies between that of the Riemann method and that of the improved Riemann method.

\section{Studied system and results}\label{sec:system&Results}

As a work of reference with which to check the soundness of our approach, we take the study of Doliwa and Heuer \cite{Doliwa2003} that investigated the influence of the system size in the simulation of supercooled binary Lennard-Jones liquids uniformly distributed in each species. The authors considered systems of different size, from 65 molecules up to 1000, and for each they ran a molecular simulation. Following this method, they reached the conclusion that 65 molecules is a sufficient size since structural properties, such as the radial distribution function and the total energy per particle, do not vary as the size changes.

The approach of Ref. \cite{Doliwa2003} is a straightforward, though numerically expensive, way to determine the finite-size effects since the different simulations can be compared directly. If our approach based on the quality factor $q$, using any of the integration methods introduced above, leads to similar results, then this shows that our fast route to calculate finite-size effects without running several explicit simulations is very solid. To show that this is indeed the case, we consider the potential from Ref. \cite{Doliwa2003} and the corresponding radial distribution functions from Ref. \cite[ESI]{Banerjee2022}, and we evaluate the integrals required for $q$ according to the techniques introduced in \cref{sec:numericalMethods}.

We decided to study the system of Ref. \cite{Doliwa2003} because the binary Lennard-Jones mixture is a simple enough system for the numerical implementation of the code, yet already complex enough to capture the essence of the method; furthermore, explicit simulations were made available in Ref. \cite{Doliwa2003} and thus our task was indeed confined to the implementation of the $q$-criterion. More complex molecules involve only a larger number of atom--atom potentials and atom--atom radial distribution functions, while the efficiency of implementation and its corresponding robustness are exactly as in the present study.

\subsection{System Parameters}

\begin{table}
    \centering
    \begin{tabular}{c|ccc}
        & $A$--$A$ & $A$--$B$ & $B$--$B$ \\ \hline
        $\varepsilon$ & 1.0 & 1.5 & 0.5 \\
        $\sigma$ & 1.0 & 0.8 & 0.88
    \end{tabular}
    \caption{Parameters for the potential of the binary Lennard-Jones mixture in the simulations of Ref. \cite{Doliwa2003}. Atomic units are used.}
    \label{table:parameters}
\end{table}

A binary Lennard-Jones mixture consists of two different particles $A$ and $B$. The Lennard-Jones potential between a pair of particles is given by
\begin{equation}\label{eq:LennardJones}
    U' = 4 \varepsilon \left[\left(\frac{r}{\sigma}\right)^{-12} - \left(\frac{r}{\sigma}\right)^{-6}\right] \ .
\end{equation}
The values of the parameters $\varepsilon$ and $\sigma$ for the different combinations of species of particles are given in \cref{table:parameters}. Note that this potential with its radial distribution function satisfies the assumptions of Lemma \ref{lem:corridor} in Appendix \ref{app:corridor}, hence the quality parameter for this system will satisfy the inequalities $q_\mathrm{min} \le q \le q_\mathrm{max}$.

The $A$ and $B$ particles have a concentration of $n_A = 0.8$ and $n_B = 0.2$, respectively. To combine all possible interactions into a single potential $U$, the individual potentials $U_{AA}$, $U_{AB}$ and $U_{BB}$ are scaled by their corresponding probabilities and then added, where the probabilities can be calculated from the concentrations:
\begin{align}
    p_{AA} &= n_A^2 \ , \\
    p_{AB} &= 2 n_A n_B \ , \\
    p_{BB} &= n_B^2 \ .
\end{align}
For the particle density the value $\rho = 1.2$ is chosen, which leads to a box size $L = \sqrt[3]{M / \rho}$ for the whole system consisting of $M$ particles. The temperature is equal to $0.5$ in units of the critical temperature $T_c$. For our purposes the exact value in proper units of temperature is not needed as we only need to use the corresponding radial distribution function.

In the next section, we will report the results of our numerical study, in particular, the convergence with respect to the critical parameter of each of the numerical integration schemes discussed in \cref{sec:numericalMethods}.

\subsection{Results}

An important aspect for the robustness of our method is the the convergence of the quality factor $q_\mathrm{max}$ as the accuracy of the four different integration methods increases. Figure \ref{fig:convergenceRiemann} shows the convergence of $q_\mathrm{max}$ as a function of the discretization step for the Riemann sum-based approaches, and Fig. \ref{fig:convergenceMC} shows the convergence of $q_\mathrm{max}$ as a function of the number of random samples used in the Monte Carlo integration method.

\begin{figure}
    \centering
    \scalebox{0.9}{\input{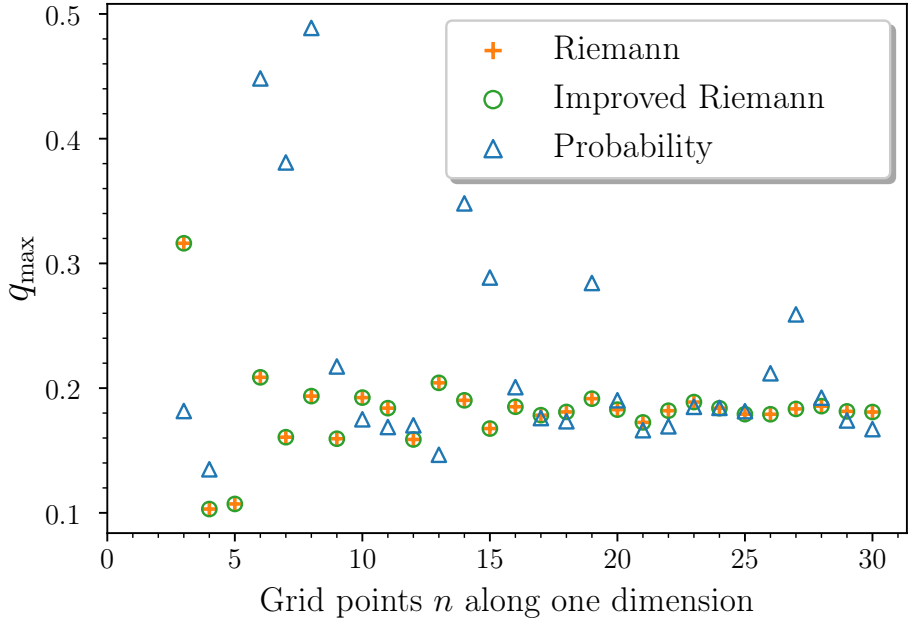}}
    \caption{Results for $q_\mathrm{max}$ obtained via the Riemann method, the improved Riemann method, and the probability method (the latter with direct one-dimensional integration with Riemann approach) as a function of the discretization step $n$ in one dimension for a system of $M = 50$ particles.}
    \label{fig:convergenceRiemann}
\end{figure}
      
\begin{figure}
    \centering
    \scalebox{0.9}{\input{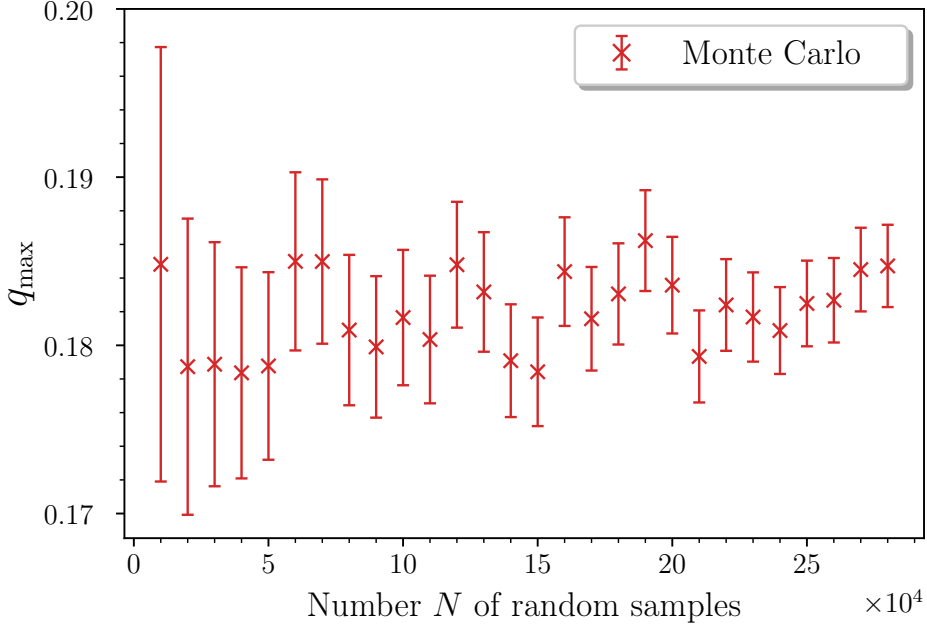}}
    \caption{Results for $q_\mathrm{max}$ obtained via the Monte Carlo method as a function of the number of sampled points $N$ for a system of $M = 50$ particles. The error bars have been computed using \cref{MCvar}.}
    \label{fig:convergenceMC}
\end{figure}
      
Having verified the internal consistency of the model, a second important aspect is the desired consistency of the results obtained for $q_\mathrm{max}$ (and $q_\mathrm{min}$) with the previous results from molecular simulation. To test for this, the four integration approaches were applied to the system studied in Ref. \cite{Doliwa2003} with molecular simulations. Figure \ref{fig:convergenceQoverN} shows the quality factors $q_\mathrm{min}$ and $q_\mathrm{max}$ as a function of the size of the system; the numerical data was obtained using the probability method (cf. \cref{subsec:probabilityMethod}) and is taken here as a representative for all four integration methods because, as evidenced by \cref{fig:convergenceRiemann,fig:convergenceMC}, they all give similar results.

As previously discussed, for a system size where structural and static quantities have been used as criteria of convergence, we expect that $q_\mathrm{max}$ and $q_\mathrm{min}$ are not very large. In Ref. \cite{Doliwa2003} the authors conclude that 65 molecules are sufficient since static and structural properties, such as the total energy and the radial distribution function, converge already and do not change significantly if the system's size is increased. In our study, for 65 molecules the quality factors $q_\mathrm{min}$ and $q_\mathrm{max}$ are in the range $13 \, \% - 17 \, \%$ which, in molecular simulation, can certainly be an acceptable thermodynamic accuracy, given the convergence of the static and structural properties. Thus, our method shows to be consistent with the conclusions drawn by the authors of Ref. \cite{Doliwa2003}.

However, as demonstrated and discussed in Refs. \cite{pra,advpx}, our method is implicitly accounting for fluctuations in the form of a response to a perturbation. Thus, if the measurement of bulk properties of interest implies small perturbations of the system, e.g., for calculations of the chemical potential, our method suggests that a larger number of particles, for example of order 200, would certainly assure a threshold of accuracy below $10\%$. Finally, on the practical side, regarding the computational resources required to obtain these results, Fig. \ref{fig:performance} shows the amount of time required by each of the four integration methods to deliver results with a negligible numerical relative error, where a standard computer available in any research group was used. The most demanding method requires a runtime of order of minutes. The implication is that our approach can be easily used as an \textit{a priori} check for designing a physically consistent system of particles for any molecular simulation.

\section{Conclusions}

\begin{figure}
    \centering
    \scalebox{0.9}{\input{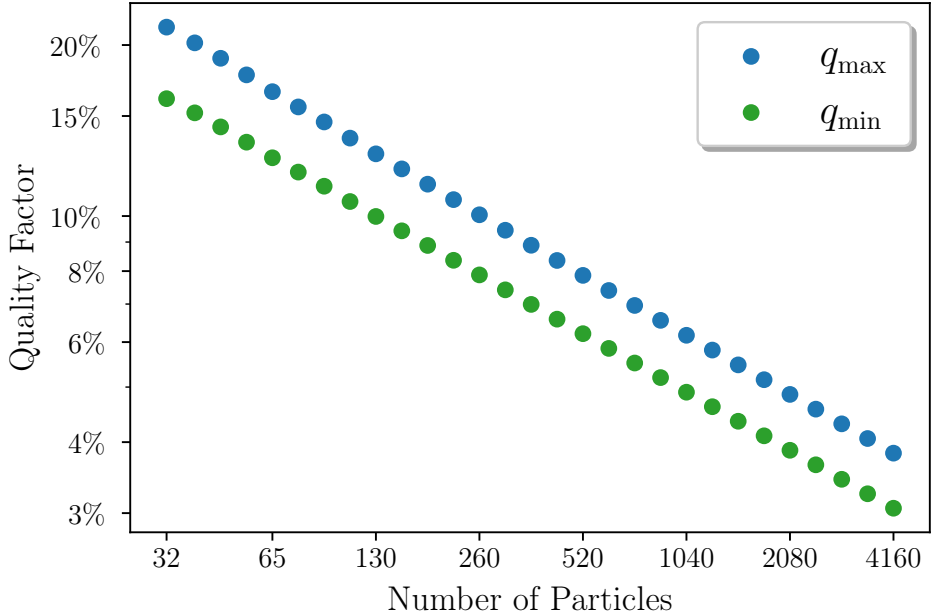}}
    \caption{Quality factors $q_\mathrm{max}$ and $q_\mathrm{min}$ on a log-log plot as a function of the number of particles. The results are obtained using the probability method, however, all the four integration methods give similar results.}
    \label{fig:convergenceQoverN}
\end{figure}

We have presented the numerical implementation of the two-sided Bogoliubov inequality for a many-particle system at uniform density. Four different integration schemes have been applied, and the internal consistency of the method was established through the convergence of the results for the different methods as their accuracy increased. Next, the consistency of our results with previous results from the literature was checked, with the data from the literature corresponding to a simulation of a mixture of Lennard-Jones particles, simulated at different sizes; we found satisfactory agreement of our results with the ones from the literature. The natural implication is that our proposed method is a useful tool for assessing the accuracy of a simulation with respect to the system's size. The runtime until the algorithm converges is for all four integration methods of the order of a few minutes on a standard machine; thus such an approach could be easily used as an \textit{a priori} check when defining a system for a simulation. Once the choice of a threshold of the (overall thermodynamic) accuracy is made, it holds that if the quality factor of our method lies above such threshold, then one can be certain that the simulation is accurate, and if it lies below, then one expects that they do not differ in a sizable manner, e.g., a maximum of 10 percentage points. In particular, for the design of systems intended for studying solvation or free energy properties, the bulk of the host liquid should indeed be reproduced by a simulation setup in all its relevant features, otherwise the corresponding simulation results may be artificial.

\begin{figure}
    \centering
    \scalebox{0.9}{\input{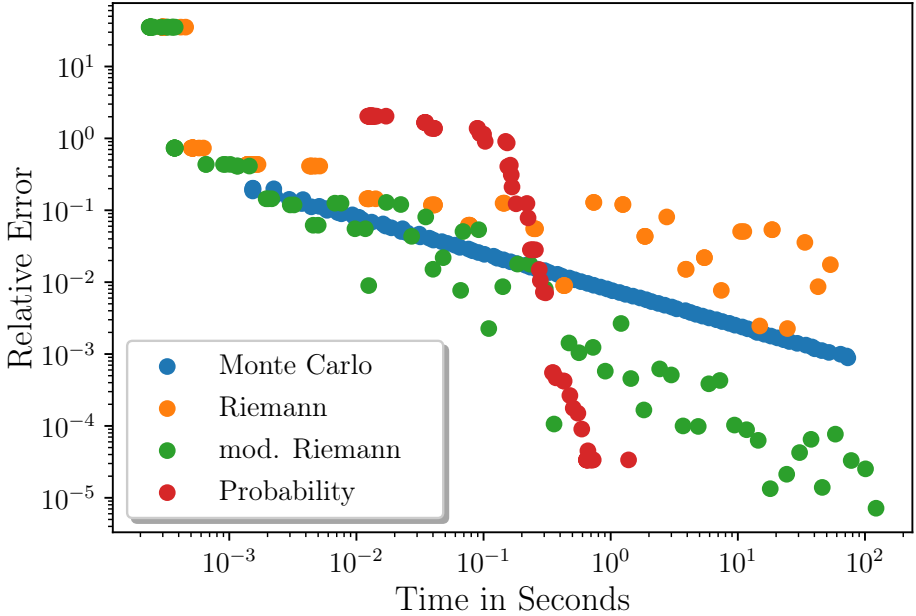}}
    \caption{The runtime and relative error for different runs and different algorithms. These calculation were performed on a desktop machine using an AMD Ryzen 7 9800X3D (2024) processor.}
    \label{fig:performance}
\end{figure} 

\section*{Data Availability}

The data that support the findings of this study are available within the article. Original data are available from the corresponding author upon reasonable request.

\acknowledgments{This work was supported by the DFG Collaborative Research Center 1114 ``Scaling Cascades in Complex Systems'', project no. 235221301, projects A05 (CH) ``Probing Scales in Equilibrated Systems by Optimal Nonequilibrium Forcing'' and C01 ``Adaptive coupling of scales in molecular dynamics and beyond to fluid dynamics''.}

\appendix

\section{\texorpdfstring{Corridor for $q$ for a certain class of potentials}{Corridor for q for a certain class of potentials}}\label{app:corridor}

In \cref{subsec:qualityFactor} we pointed out that the quantity $q_\mathrm{min}$ defined in \cref{eq:qmin} is, in general, not a lower bound for the quality factor $q$, hence the latter may lie below the corridor $[q_\mathrm{min}, q_\mathrm{max}]$. Here, we show that for potentials $U$ with certain properties, $q_\mathrm{min}$ is a true lower bound.

\begin{lemma}\label{lem:corridor}
    Suppose that $U$ is a two-body potential depending only on the relative distance between particles, i.e., a function $U : [0, \infty) \to \R$. Furthermore, assume that there is a value $r_0 \in (0, + \infty)$ such that
    \begin{equation*}
        U(r)
        \begin{cases}
            > 0 & \text{if $r < r_0$} \ , \\
            \le 0 & \text{if $r \ge r_0$} \ .
        \end{cases}
    \end{equation*}
    If the radial distribution function satisfies $g(r) \approx 0$ for all $r < r_0$, and if the parameter $\sigma$ appearing in \cref{eq:upperBound} is chosen equal to $r_0$, then it follows that $q_\mathrm{min} \le q$.
\end{lemma}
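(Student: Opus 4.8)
The plan is to establish the inequality $q_{\mathrm{min}} \le q$ by showing that under the stated hypotheses, the two Bogoliubov bounds $\E_f[U]$ and $\E_{f_1,f_2}[U]$ have the same sign (both nonpositive), which—by the parenthetical remark in \cref{subsec:qualityFactor}—immediately yields $q_{\mathrm{min}} \le q \le q_{\mathrm{max}}$. So the whole task reduces to proving two sign statements: $\E_f[U] \le 0$ and $\E_{f_1,f_2}[U] \le 0$.

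First I would handle the lower bound $\E_f[U]$, given by \cref{eq:lowerBound} as $\rho^2 \int_{\Omega_1}\int_{\Omega_2} U(\bfr - \bfr')\, g(\bfr,\bfr')\, \diff\bfr'\diff\bfr$. The integrand is a product of $U$ and the (nonnegative) radial distribution function $g$. Split the domain of integration according to whether $\abs{\bfr - \bfr'} < r_0$ or $\abs{\bfr-\bfr'} \ge r_0$. On the region where $\abs{\bfr-\bfr'} \ge r_0$ we have $U \le 0$ and $g \ge 0$, so the integrand is $\le 0$ there. On the region where $\abs{\bfr-\bfr'} < r_0$ we have $U > 0$, but by hypothesis $g \approx 0$, so this contribution is negligible (and, in the idealized statement where one takes $g(r) = 0$ exactly for $r < r_0$, it vanishes identically). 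Hence $\E_f[U] \le 0$.

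Next I would treat the upper bound $\E_{f_1,f_2}[U]$. Using the simplified form \cref{eq:simplifiedUpperBound} for uniform density, this is $\rho^2 \int_{\Omega_1}\int_{\Omega_2} U(\bfr - \bfr')\, \1_{\set{\abs{x-x'}\ge\sigma}}\, \diff\bfr'\diff\bfr$ with $\sigma = r_0$. Here the key geometric observation is that on the support of the indicator function we have $\abs{x - x'} \ge r_0$, and since $\abs{\bfr - \bfr'} \ge \abs{x - x'}$ (the full Euclidean distance dominates the distance in the $x$-direction alone), it follows that $\abs{\bfr - \bfr'} \ge r_0$ wherever the indicator is nonzero. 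On this set $U(\bfr - \bfr') \le 0$ by hypothesis, so the integrand is $\le 0$ everywhere, giving $\E_{f_1,f_2}[U] \le 0$.

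Having shown both bounds are $\le 0$, the absolute values in the definitions \cref{eq:qmax} and \cref{eq:qmin} are simply sign flips, and $\min\{\abs{\E_f[U]},\abs{\E_{f_1,f_2}[U]}\} \le \abs{\Delta F} \le \max\{\abs{\E_f[U]},\abs{\E_{f_1,f_2}[U]}\}$ follows from \cref{eq:bogoliubov} because $\E_f[U] \le \Delta F \le \E_{f_1,f_2}[U] \le 0$ forces $\Delta F$ to lie between them in absolute value as well; dividing through by $\abs{E_{\mathrm{ref}}}$ concludes the argument. The main subtlety—and the only place where the "approximately" in "$g(r) \approx 0$" has to be handled with care—is the short-distance region $r < r_0$ in the lower bound: in a fully rigorous treatment one would either adopt the idealization $g \equiv 0$ there (which is the physically standard hard-core assumption and matches the Lennard-Jones setting of the paper, where $g$ is exponentially small below the repulsive wall) or carry a controlled error term. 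I expect this to be the part requiring the most care in the write-up, whereas the upper-bound sign statement is a clean consequence of the elementary inequality $\abs{\bfr-\bfr'} \ge \abs{x-x'}$ together with the choice $\sigma = r_0$.
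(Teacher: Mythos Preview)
Your proposal is correct and follows essentially the same approach as the paper: show that both $\E_f[U]$ and $\E_{f_1,f_2}[U]$ are nonpositive (via the same domain splitting for the lower bound and the same inequality $\abs{\bfr-\bfr'}\ge\abs{x-x'}$ for the upper bound), then invoke the same-sign observation from \cref{subsec:qualityFactor}. The only cosmetic difference is that the paper treats the upper bound first and the lower bound second, and handles the ``$g\approx 0$'' region in exactly the informal way you anticipated.
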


\begin{proof}
    As noted below \cref{eq:qmin}, it suffices to show that the upper bound $\E_{f_1,f_2}[U]$ and lower bound $\E_{f}[U]$ for $\Delta F$ have the same sign in order to conclude that $q_\mathrm{min} \le q$.

    With the assumptions laid out above, the upper bound can be computed according to \cref{eq:simplifiedUpperBound}, where only those points satisfying $\abs{x - x^\prime} \ge r_0$ contribute to the integral ($\sigma = r_0$). All the points $\bfr, \bfr^\prime \in \Omega$ satisfying this conditions clearly have to satisfy $\abs{\bfr - \bfr^\prime} \ge r_0$ as well, see also \cref{eq:ineqIndicatorFunctions}. Therefore, we have
    \begin{equation*}
        \E_{f_1,f_2}[U] = \rho^{2} \int_{\Omega_1} \int_{\Omega_2} \underbrace{U (\bfr - \bfr^\prime) \, \1_{\set{\abs{x - x^\prime} \ge r_0}}}_{\le 0} \diff \bfr^\prime \diff \bfr \le 0 \ .
    \end{equation*}
    Using the identity \eqref{eq:lowerBound} for the lower bound and the above assumptions, we find that
    \begin{align*}
        \E_f[U] &= \rho^2 \int_{\Omega_1} \int_{\Omega_2} U (\bfr - \bfr^\prime) g(\bfr, \bfr^\prime) \diff \bfr^\prime \diff \bfr \\
        &= \rho^2 \int_{\Omega_1} \int_{\Omega_2} U (\bfr - \bfr^\prime) \underbrace{g(\bfr, \bfr^\prime) \1_{\set{\abs{\bfr - \bfr^\prime} < r_0}}}_{\approx 0} \diff \bfr^\prime \diff \bfr \\
            &\quad + \rho^2 \int_{\Omega_1} \int_{\Omega_2} U (\bfr - \bfr^\prime) g(\bfr, \bfr^\prime) \1_{\set{\abs{\bfr - \bfr^\prime} \ge r_0}} \diff \bfr^\prime \diff \bfr \\
        &= \rho^2 \int_{\Omega_1} \int_{\Omega_2} \underbrace{U (\bfr - \bfr^\prime) g(\bfr, \bfr^\prime) \1_{\set{\abs{\bfr - \bfr^\prime} \ge r_0}}}_{\le 0} \diff \bfr^\prime \diff \bfr \le 0 \ ,
    \end{align*}
    where the two identities under the braces follow from the assumptions and the fact that $g(\bfr, \bfr^\prime) \ge 0$ for all $\bfr, \bfr^\prime \in \Omega$. Thus, we conclude that $\E_{f_1,f_2}[U], \E_f[U] \le 0$ have the same sign which proves the assertion.
\end{proof}

\newpage We mention that the assumptions of Lemma \ref{lem:corridor} are very natural from the point of view of molecular simulation; in particular, the Lennard-Jones potential \eqref{eq:LennardJones} and its radial distribution functions used in this study satisfy these assumptions.

\section{Derivation of the improved Riemann method}\label{app:improvedRiemann}

To simplify the explanation, we shall use two-dimensional grids; the extension to three dimensions is then straightforward (see below). Let $G_1$ be the discretization of $\Omega_1$ and $G_2$ be the discretization of $\Omega_2$:
\begin{align*}
    G_1 &= \set{r_{i, j} = (x_{1, i}, y_{1, j}) \ : \ i, j = 0, \dotsc, n-1} \ ,\\
    G_2 &= \set{r_{i, j}' = (x_{2, i}, y_{2, j}) \ : \ i, j = 0, \dotsc, n-1} \ .
\end{align*}
When both grids have the same shape, size and orientation, their base vectors are equal:
\begin{equation*}
    r_{0, 0} - r_{1, 0} = r_{0, 0}' - r_{1, 0}' \quad \text{and} \quad r_{0, 0} - r_{0, 1} = r_{0, 0}' - r_{0, 1}' \ .
\end{equation*}
This can be extended to arbitrary grid points $r_{i, j} \in G_1$ and $r_{k, l}' \in G_2$: for appropriately chosen $v,w \in \set{0, \dotsc, n-1}$, we have
\begin{equation*}
    r_{i, j} - r_{i+v, j+w} = r_{k, l}' - r_{k+v,l+w}' \ ,
\end{equation*}
or equivalently
\begin{equation*}
    r_{i, j} - r_{k, l}' = r_{i+v,j+w} - r_{k+v,l+w}' \ ,
\end{equation*}
with $i, j, k, l \in \set{0, \ldots, n-1}$. The number of pairs of points that have the same distance as $r_{i, j}$ and $r_{k, l}'$ can now be determined by computing the number of possibles choices for the shifts $v$ and $w$. This can be done using the fact that the vectors $r_{i+v,j+w}$ and $r_{k+v,l+w}'$ still have to be inside $G_1$, respectively, $G_2$, that is:
\begin{align*}
    0 \le i+v &< n & &\land & 0 \le k+v &< n & &\implies & v &< n - \mathrm{max}(i, k) \ ,\\
    0 \le j+w &< n & &\land & 0 \le l+w &< n & &\implies & w &< n - \mathrm{max}(j, l) \ .
\end{align*}
Since $v, w \ge 0$ are non-negative, we have to require that
\begin{align}
    \bigl(i = 0 \, \lor \, k = 0\bigr) \quad \land \quad \bigl(j = 0 \, \lor \, l = 0\bigr) \label{eq:ijkl}
\end{align}
in order to cover all pairs of grid points. (Indeed, if $i, k \neq 0$ for example, then $i + v \neq 0$ and $k + v \neq 0$, hence we would not cover points for which the $x$-index is zero.) Thus, if we iterate over all possible values of $i$, $j$, $k$ and $l$ for which \cref{eq:ijkl} is satisfied, we cover all pairs of points with distinct distance, and the number $C(i, j, k, l)$ of such pairs that have the same distance is equal to the number of possible choices for $v$ and $w$:
\begin{align*}
    C(i, j, k, l) &= \bigl(n - \max(i, k)\bigr) \cdot \bigl(n - \max(j, l)\bigr) \\
    &= \bigl(n - \abs{i - k}\bigr) \cdot \bigl(n - \abs{j - l}\bigr) \ .
\end{align*}

To confirm that this method covers all pairs of points, the sum $\mathfrak{S}$ of all values $C(i, j, k, l)$, given the constraint \eqref{eq:ijkl}, shall be computed; it should be equal to the total number of pairs of grid points, in the present case $(n^2)^2 = n^4$. First, we find
\begin{align*}
    \mathfrak{S} &= \sum_{i=0}^{n-1} \sum_{j=0}^{n-1} C(i, j, 0, 0) + \sum_{k=1}^{n-1} \sum_{l=1}^{n-1} C(0, 0, k, l) + \sum_{k=1}^{n-1} \sum_{j=0}^{n-1} C(0, j, k, 0) + \sum_{i=0}^{n-1} \sum_{l=1}^{n-1} C(i, 0, 0, l) \\%\displaybreak
    &= \sum_{i=0}^{n-1} \sum_{j=0}^{n-1} (n - \abs{i - 0}) \cdot (n - \abs{j - 0}) + \sum_{k=1}^{n-1} \sum_{l=1}^{n-1} (n - \abs{0 - k}) \cdot (n - \abs{0 - l}) \\
        &\quad + \sum_{k=1}^{n-1} \sum_{j=0}^{n-1} (n - \abs{0 - k}) \cdot (n - \abs{j - 0}) + \sum_{i=0}^{n-1} \sum_{l=1}^{n-1} (n - \abs{i - 0}) \cdot (n - \abs{0 - l}) \\
    &= \sum_{i=0}^{n-1} \sum_{j=0}^{n-1} (n - i) \cdot (n - j) + \sum_{k=1}^{n-1} \sum_{l=1}^{n-1} (n - k) \cdot (n - l) \\
        &\quad + \sum_{k=1}^{n-1} \sum_{j=0}^{n-1} (n - k) \cdot (n - j) + \sum_{i=0}^{n-1} \sum_{l=1}^{n-1} (n - i) \cdot (n - l) \ .
\end{align*}
Using that
\begin{align*}
    \sum_{i=0}^{n-1} \sum_{j=0}^{n-1} (n - i) \cdot (n - j) = \biggl(\,\sum_{i=1}^{n} i\biggr) \biggl(\,\sum_{j=1}^{n} j\biggr) = \left(\frac{n (n + 1)}{2}\right)^2
\end{align*}
and similar expression for the other three sums, it follows that
\begin{align*}
    \mathfrak{S} &= \frac{1}{4} \, n^2 (n+1)^2 + \frac{1}{4} \, n^2 (n-1)^2 + \frac{1}{4} \, n^2 (n^2-1) + \frac{1}{4} \, n^2 (n^2-1) \\[4pt]
    &= \frac{1}{4} \, n^2 \cdot \Bigl(n^2 + 2 n + 1 + n^2 - 2 n + 1 + n^2 - 1 + n^2 - 1\Bigr) \\[4pt]
    &= \frac{1}{4} \, n^2 \cdot 4 n^2 \\[4pt]
    &= n^4 \ .
\end{align*}
This shows that we do not miss any pair of points of the original Riemann sum. To extend this method to three dimensions, only two new indices for the new dimension in $G_1$ and $G_2$ have to be added.

\section{Derivation of the probability method}\label{app:proofProbability}

The reduction of a multi-dimensional integral to a one-dimensional integral with the distance as integration variable is usually achieved by approximating an isotropic system with a large sphere and using spherical coordinates (see, e.g., Ref. \cite[Sec. 4.7.1]{tuckbooknew}). For an arbitrary shape (such as the cuboid of a simulation cell), we have derived a general argument below that does not rely on the qualitative spherical hypothesis.

Let $(\Omega, \Sigma, \mathbb{P})$ be a probability space, $(E, \mathfrak{A})$ be a measurable space, and $Y : \Omega \to E$ be a random variable, i.e., a $\Sigma$-$\mathfrak{A}$-measurable function. Let $Y_\ast \mathbb{P} : \mathfrak{A} \to [0, 1]$ denote the pushforward measure of $\mathbb{P}$ by $Y$ which is defined as
\begin{equation*}
    (Y_\ast \mathbb{P}) (A) \ce \mathbb{P} \bigl(Y^{-1}(A)\bigr) \quad \text{for} \quad A \in \mathfrak{A} \ ,
\end{equation*}
where $Y^{-1}(A) = \set{x \in \Omega \, : \, Y(x) \in A}$ denotes the preimage of the set $A$ under $Y$. According to the well-known change of variables formula (see, e.g., Ref. \cite[Thm. A.31]{Teschl2014}), the following holds true for any Borel-measurable function $h : E \to \R$: the mapping $g \circ Y : \Omega \to \R$ is $\mathbb{P}$-integrable if and only if $h$ is $Y_\ast \mathbb{P}$-integrable, and in this case one has
\begin{equation}\label{eq:changeOfVariables}
    \int_\Omega (h \circ Y) \diff \mathbb{P} = \int_E h \diff (Y_\ast \mathbb{P}) \ .
\end{equation}

Consider now the specific situation of \cref{subsec:probabilityMethod}: $\Omega$ is a bounded set $V \subset \R^6$, $\Sigma$ is the Borel $\sigma$-algebra of $V$, $\mathbb{P}$ is the six-dimensional Lebesgue measure $\mathcal{L}^6$ divided by the volume $\abs{V}$ of $V$ (making it a probability measure), $E = [0, + \infty)$ with corresponding Borel $\sigma$-algebra, and $Y$ is the function $D : V \to [0, + \infty)$ defined in \cref{eq:sixDimDist}. Starting with the definition \eqref{eq:J} of the integral $J$ and using \cref{eq:changeOfVariables}, we obtain
\begin{equation}\label{eq:proofProbabilityMehtod}
    J = \abs{V} \int_V h \bigl(D(x)\bigr) \, \frac{1}{\abs{V}} \diff x = \abs{V} \int_V (h \circ D) \diff \mathbb{P} = \abs{V} \int_{0}^{\infty} h \diff (D_\ast \mathbb{P}) \ .
\end{equation}

\begin{lemma}\label{lem:absoluteContinuity}
    On the Borel $\sigma$-algebra of $[0, + \infty)$, the measure $D_\ast \mathbb{P}$ is absolutely continuous with respect to the one-dimensional Lebesgue measure $\mathcal{L}^1$.
\end{lemma}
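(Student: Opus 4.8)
\section*{Proof proposal for \cref{lem:absoluteContinuity}}

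The plan is to verify the defining property of absolute continuity directly. Let $A \subseteq [0, + \infty)$ be a Borel set with $\mathcal{L}^1(A) = 0$; we must show $(D_\ast \mathbb{P})(A) = \mathbb{P}\bigl(D^{-1}(A)\bigr) = 0$. Since $\mathbb{P}$ is the constant multiple $|V|^{-1} \mathcal{L}^6$ of the six-dimensional Lebesgue measure restricted to the \emph{bounded} set $V$, and since $D$ is continuous (so $D^{-1}(A)$ is Borel), it is enough to prove that $\mathcal{L}^6\bigl(D^{-1}(A)\bigr) = 0$; intersecting with $V$, a set of finite measure, then gives $\mathbb{P}\bigl(D^{-1}(A) \cap V\bigr) = 0$ as required.

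First I would perform a linear change of coordinates on $\R^6$. Writing $u = x_1 - x_4$, $v = x_2 - x_5$, $w = x_3 - x_6$ and keeping $x_4, x_5, x_6$, the map $\Phi(x_1, \dotsc, x_6) = (u, v, w, x_4, x_5, x_6)$ is an invertible linear transformation, hence pushes $\mathcal{L}^6$ forward to a nonzero constant multiple of itself; in particular $\Phi$ and $\Phi^{-1}$ both map $\mathcal{L}^6$-null sets to $\mathcal{L}^6$-null sets. In the new coordinates $D = \sqrt{u^2 + v^2 + w^2}$ depends only on $(u, v, w)$, so
\begin{equation*}
    \Phi\bigl(D^{-1}(A)\bigr) = S_A \times \R^3, \qquad S_A \ce \set[\big]{(u, v, w) \in \R^3 \, : \, \abs{(u, v, w)} \in A} \ .
\end{equation*}
Since $S_A$ is the preimage of the Borel set $A$ under the continuous map $\abs{\cdot} : \R^3 \to [0, + \infty)$, it is Borel, and it remains only to show $\mathcal{L}^3(S_A) = 0$: then $S_A \times \R^3$ is $\mathcal{L}^6$-null, hence so is $D^{-1}(A) = \Phi^{-1}(S_A \times \R^3)$.

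The vanishing of $\mathcal{L}^3(S_A)$ follows from the polar-coordinate decomposition of $\mathcal{L}^3$: since $\1_{S_A}(u, v, w) = \1_A\bigl(\abs{(u,v,w)}\bigr)$ is radial,
\begin{equation*}
    \mathcal{L}^3(S_A) = \int_{\R^3} \1_A\bigl(\abs{(u,v,w)}\bigr) \diff (u, v, w) = 4 \pi \int_0^\infty \1_A(r) \, r^2 \diff r = 0 \ ,
\end{equation*}
the last equality because $\1_A = 0$ holds $\mathcal{L}^1$-almost everywhere. Combined with \cref{eq:proofProbabilityMehtod}, the Radon--Nikodym theorem then produces the density $p_D$ of $D_\ast \mathbb{P}$ with respect to $\mathcal{L}^1$ (and, in the half-cube geometry, the analogous $q_D$) used in \cref{subsec:probabilityMethod}.

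I do not anticipate a genuine obstacle here; the argument is a routine measure-theoretic exercise, and the only points needing a little care are the bookkeeping for the linear change of variables (the Jacobian is a nonzero constant, so null sets are preserved in both directions) and the invocation of the polar-coordinate/Fubini formula for $\mathcal{L}^3$ applied to a radial indicator. If one prefers to avoid the explicit reduction, an alternative is the coarea formula applied to the Lipschitz function $D$: noting that $D$ is the Euclidean distance to a three-dimensional linear subspace, so that $\abs{\nabla D} = 1$ almost everywhere, one gets $\mathcal{L}^6\bigl(D^{-1}(A) \cap V\bigr) = \int_A \mathcal{H}^5\bigl(D^{-1}(t) \cap V\bigr) \diff t = 0$, the integrand being finite (indeed uniformly bounded, since $V$ is bounded) and the integration being over an $\mathcal{L}^1$-null set.
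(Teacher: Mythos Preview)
Your proof is correct and takes a genuinely different, more elementary route than the paper. The paper does not exploit the specific structure of $D$; instead it invokes a general theorem of Ponomarev stating that a continuous, almost-everywhere differentiable map $f : \Omega \subset \R^n \to \R^k$ with $k < n$ has the Lusin $(N^{-1})$-property (preimages of null sets are null) provided $\operatorname{rank} f' = k$ almost everywhere, and then checks that $D$ satisfies this rank condition off the three-dimensional diagonal $\{x_1 = x_4,\, x_2 = x_5,\, x_3 = x_6\}$. Your argument, by contrast, uses the concrete form of $D$: after the unimodular linear change $\Phi$ it becomes a purely radial function of $(u,v,w)$, so the preimage factors as $S_A \times \R^3$ and the polar-coordinate formula $\mathcal{L}^3(S_A) = 4\pi \int_0^\infty \1_A(r)\, r^2 \diff r$ finishes the job. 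This is self-contained and avoids the external reference; indeed it implicitly yields the Radon--Nikod\'ym density (up to the $|V|^{-1}$ normalisation and the geometry of $V$), whereas the paper's route gives only existence. The paper's approach has the virtue of applying to any $D$ meeting the rank hypothesis, without needing a convenient coordinate decomposition. Your coarea alternative is also valid and sits between the two in generality.
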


\begin{proof}
    Let $N \subset [0, + \infty)$ be an arbitrary $\mathcal{L}^1$-measurable set with $\mathcal{L}^1(N) = 0$. According to the definition of absolute continuity of measures \cite[p. 331]{Teschl2014}, we have to show that this implies $(D_\ast \mathbb{P}) (N) = 0$ as well, that is, $\frac{1}{\abs{V}} \, \mathcal{L}^6 (D^{-1}(N)) = 0$.
    
    Observe that this desired implication is equivalent to saying that the continuous function $D : V \to [0, + \infty)$ has the so-called \enquote{Lusin ($N^{-1}$)-property} which entails that $\abs{D^{-1}(N)} = \mathcal{L}^6 (D^{-1}(N)) = 0$ for all $N \subset [0, + \infty)$ which satisfy $\abs{N} = \mathcal{L}^1(N) = 0$ \cite{Ponomarev1987,Ponomarev1995}. As shown in Ref. \cite[Thm. 2]{Ponomarev1987}, a continuous and almost everywhere differentiable function $f : \R^n \supset \Omega \to \R^k$ with $k < n$ has the Lusin ($N^{-1}$)-property if $\operatorname{rank} f^\prime = k$ almost everywhere in $\Omega$.
    
    In our case, $k = 1$ and the function $D$ is differentiable everywhere in $V$ expect in the set $S \ce \set{x \in V \, : \, x_1 = x_4, \, x_2 = x_5, \, x_3 = x_6}$ which is a three-dimensional hyperplane in $\R^6$, hence it has Lebesgue measure zero, so $D$ is differentiable almost everywhere. One easily sees by direct computation that $D^\prime(x) \neq 0$ is not the zero vector if $x \notin S$, thus $\operatorname{rank} D^\prime = 1$ almost everywhere in $V$. Therefore, by the theorem cited above, the function $D$ has the Lusin ($N^{-1}$)-property, and hence the assertion of the lemma follows.
\end{proof}

By virtue of Lemma \ref{lem:absoluteContinuity}, we may apply the Radon-Nikodým theorem (see, for example, \cite[Thm. A.38]{Teschl2014}) to the $\sigma$-finite measures $D_\ast \mathbb{P}$ and $\mathcal{L}^1$ to conclude that there exists a uniquely defined density $p_D : [0, + \infty) \to [0, + \infty)$ of $D_\ast \mathbb{P}$ with respect to $\mathcal{L}^1$, i.e., for all Borel sets $I \subset [0, + \infty)$ there holds
\begin{equation*}
    (D_\ast \mathbb{P}) (I) = \int_I p_D \diff \mathcal{L}^1 \ .
\end{equation*}
Inserting this result into \cref{eq:proofProbabilityMehtod}, we conclude that
\begin{equation*}
    J = \abs{V} \int_{0}^{\infty} h \diff (D_\ast \mathbb{P}) = \abs{V} \int_{0}^{\infty} h(r) p_D(r) \diff r
\end{equation*}
which is the asserted \cref{eq:probabilityMethod}. Note that the entire argument works for an $n$-dimensional region $V \subset \R^n$ as well. If $V \subset \R^3$ is the unit cube, a concrete expression for the density $p_D$ is known in the literature and will be given in the next appendix.

\section{\texorpdfstring{Formulas for $p_D$ and adaptation for two half cubes}{Formulas for pD and adaptation for two half cubes}}\label{app:probability}

In Refs. \cite{Mathai1999, Zilinskas2003, Philip2007} one finds a derivation of the probability density function $p_D(r)$ for the distance between two uniformly distributed points in the unit cube $[0, 1]^3 \subset \R^3$. For the sake of completeness, we reproduce here the final result of Ref. \cite{Zilinskas2003} only, as this was used for our numerical computations; note that even though the three references give different results for the final formulas, they agree numerically with each other, see \cref{fig:prob_comparison}.

The function $p_D(r)$ of Ref. \cite{Zilinskas2003} is given by
\begin{equation*}
    p_D(r) =
    \begin{cases}
        p_1(r) \ , & 0 \le r \le 1 \ , \\
        p_2(r) \ , & 1 \le r \le \sqrt{2} \ , \\
        p_3(r) \ , & \sqrt{2} \le r \le \sqrt{3} \ , \\
        0 \ , & \text{otherwise} \ ,
    \end{cases}
\end{equation*} 
where
\begin{align*}
    p_1(r) &= -6\pi r^3 - r^5 + 8r^4 + 4\pi r^2 \ , \\
    p_2(r) &= 2r^5 - 8\pi r^2 - r + 6\pi r + 24r^3 \arctan\bigl(\sqrt{r^2-1}\bigr)-16r^3 \sqrt{r^2 - 1} - 8r\sqrt{r^2-1} \ ,
\end{align*}
and, setting $r_0 = \sqrt{r^2 - 2}$,
\begingroup
\allowdisplaybreaks
\begin{align*}
    p_3(r) &= \frac{r}{(1 + r_0 r - r^2)(-1 + r_0 r + r^2) r_0} \\[4pt]
    &\quad \times \biggl[r_0 r^4 - 8r^4 - 8r_0 r^2 \arctan\left(\frac{1}{r_0}\right) - 4r_0 r^2 \arctan(-1 + r_0 r + r^2) \\
    &\qquad + 4 r_0 r^2 \arctan\left(\frac{-1 + r + r^2}{r_0}\right) + 8r^2 \arctan(r_0) r_0 \\
    &\qquad + 4 r_0 r^2 \arctan\left(\frac{-1 - r + r^2}{r_0}\right) - 4 r_0 r^2 \arctan(-1 - r_0 r + r^2) \\
    &\qquad - 8 r_0 r \arctan(-1 + r_0 r + r^2) - 8r_0 r \arctan\left(\frac{-1 - r + r^2}{r_0}\right) \\
    &\qquad + 8 r_0 r \arctan(-1 - r_0 r + r^2) + 8r_0 r \arctan\left(\frac{-1 + r + r^2}{r_0}\right) \\
    &\qquad - 12 r_0 \arctan\left(\frac{1}{r_0}\right) + 5r_0 + 16 + 12\arctan(r_0) r_0 \biggr] \ ,
\end{align*}
\endgroup

The calculation of the probability density $q_D(r)$ for the two half cubes is adapted from Ref. \cite{Zilinskas2003} and done with Mathematica \cite{Mathematica}. The cumulative distribution function $F(r)$ for the distance between two points in two halves of a unit cube is given by
\begin{equation*}
    F(r) = \int_{\sqrt{x_1^2 + x_2^2 + x_3^2} \le r} p(x_1) \cdot p(x_2) \cdot q(x_3) \diff x_1 \diff x_2 \diff x_3 \ ,
\end{equation*}
where $p$ is the probability density for the distance between two random points uniformly distributed in the interval $[0, 1]$:
\begin{equation}
    p(x) =
    \begin{cases}
        2-2x \ , & 0 \le x \le 1 \ , \\
        0 \ , & \text{otherwise} \ ,
    \end{cases}
\end{equation}
and $q$ is the probability density function for the distance between two random points, one uniformly distributed in the interval $[0, 0.5]$ and the other in the interval $[0.5, 1]$:
\begin{equation}
    q(x) =
    \begin{cases}
        4x \ , & 0 \le x \le \frac{1}{2} \ , \\
        4-4x \ , & \frac{1}{2} \le x \le 1 \ , \\
        0 \ , & \text{otherwise} \ .
    \end{cases}
\end{equation}
The computer algebra system Mathematica is used to solve first the integrals in the definition of $F(r)$, and then to calculate its derivative to obtain the probability density function. The output is then transformed into a python function using the script \cite{Math2Python} and the trigonometric functions from NumPy \cite{NumPyPaper}. The explicit final expression is much more involved than the one for the unit cube case, hence it is not explicitly shown here (see, however, Fig. \ref{fig:prob} for a graphical representation).

\begin{figure}
    \centering
    \scalebox{0.9}{\input{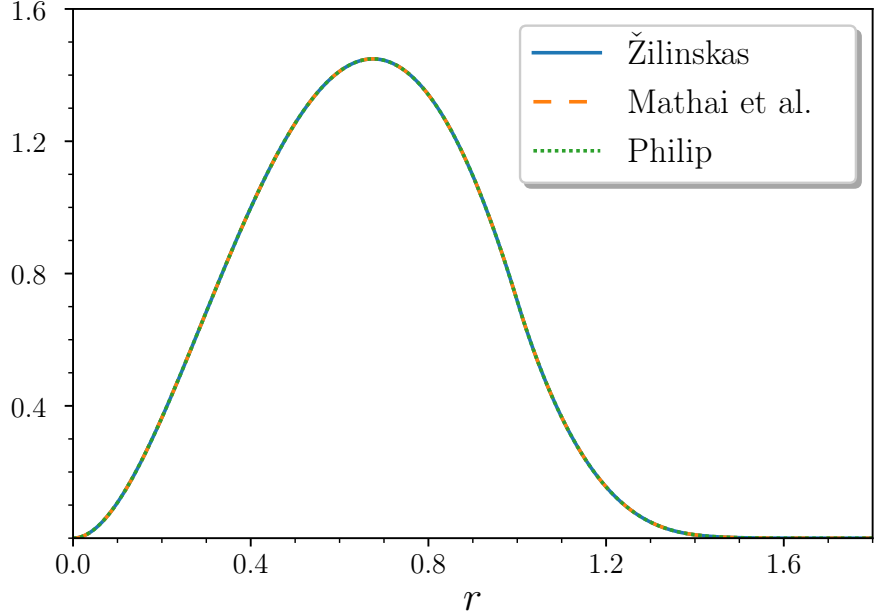}}
    \caption{Probability density functions for the distance between two uniformly distributed points inside a unit cube from Refs \cite{Zilinskas2003} (solid blue line), Ref. \cite{Mathai1999} (dashed orange line), and Ref. \cite{Philip2007} (dotted green line).}
    \label{fig:prob_comparison}
\end{figure}

\bibliography{finitesize.bib}

%apsrev4-2.bst 2019-01-14 (MD) hand-edited version of apsrev4-1.bst
%Control: key (0)
%Control: author (8) initials jnrlst
%Control: editor formatted (1) identically to author
%Control: production of article title (0) allowed
%Control: page (0) single
%Control: year (1) truncated
%Control: production of eprint (0) enabled
\begin{thebibliography}{32}%
\makeatletter
\providecommand \@ifxundefined [1]{%
 \@ifx{#1\undefined}
}%
\providecommand \@ifnum [1]{%
 \ifnum #1\expandafter \@firstoftwo
 \else \expandafter \@secondoftwo
 \fi
}%
\providecommand \@ifx [1]{%
 \ifx #1\expandafter \@firstoftwo
 \else \expandafter \@secondoftwo
 \fi
}%
\providecommand \natexlab [1]{#1}%
\providecommand \enquote  [1]{``#1''}%
\providecommand \bibnamefont  [1]{#1}%
\providecommand \bibfnamefont [1]{#1}%
\providecommand \citenamefont [1]{#1}%
\providecommand \href@noop [0]{\@secondoftwo}%
\providecommand \href [0]{\begingroup \@sanitize@url \@href}%
\providecommand \@href[1]{\@@startlink{#1}\@@href}%
\providecommand \@@href[1]{\endgroup#1\@@endlink}%
\providecommand \@sanitize@url [0]{\catcode `\\12\catcode `\$12\catcode `\&12\catcode `\#12\catcode `\^12\catcode `\_12\catcode `\%12\relax}%
\providecommand \@@startlink[1]{}%
\providecommand \@@endlink[0]{}%
\providecommand \url  [0]{\begingroup\@sanitize@url \@url }%
\providecommand \@url [1]{\endgroup\@href {#1}{\urlprefix }}%
\providecommand \urlprefix  [0]{URL }%
\providecommand \Eprint [0]{\href }%
\providecommand \doibase [0]{https://doi.org/}%
\providecommand \selectlanguage [0]{\@gobble}%
\providecommand \bibinfo  [0]{\@secondoftwo}%
\providecommand \bibfield  [0]{\@secondoftwo}%
\providecommand \translation [1]{[#1]}%
\providecommand \BibitemOpen [0]{}%
\providecommand \bibitemStop [0]{}%
\providecommand \bibitemNoStop [0]{.\EOS\space}%
\providecommand \EOS [0]{\spacefactor3000\relax}%
\providecommand \BibitemShut  [1]{\csname bibitem#1\endcsname}%
\let\auto@bib@innerbib\@empty
%</preamble>
\bibitem [{\citenamefont {Frenkel}\ and\ \citenamefont {Smit}(2002)}]{frenkel}%
  \BibitemOpen
  \bibfield  {author} {\bibinfo {author} {\bibfnamefont {D.}~\bibnamefont {Frenkel}}\ and\ \bibinfo {author} {\bibfnamefont {B.}~\bibnamefont {Smit}},\ }\href {https://doi.org/10.1016/B978-0-12-267351-1.X5000-7} {\emph {\bibinfo {title} {Understanding {M}olecular {S}imulation. {F}rom {A}lgorithms to {A}pplications}}},\ \bibinfo {edition} {2nd}\ ed.\ (\bibinfo  {publisher} {Academic Press},\ \bibinfo {address} {San Diego},\ \bibinfo {year} {2002})\BibitemShut {NoStop}%
\bibitem [{\citenamefont {Tuckerman}(2023)}]{tuckbooknew}%
  \BibitemOpen
  \bibfield  {author} {\bibinfo {author} {\bibfnamefont {M.~E.}\ \bibnamefont {Tuckerman}},\ }\href {https://doi.org/10.1093/oso/9780198825562.001.0001} {\emph {\bibinfo {title} {Statistical {M}echanics: {T}heory and {M}olecular {S}imulation}}},\ \bibinfo {edition} {2nd}\ ed.\ (\bibinfo  {publisher} {Oxford University Press},\ \bibinfo {address} {Oxford},\ \bibinfo {year} {2023})\BibitemShut {NoStop}%
\bibitem [{\citenamefont {Salacuse}\ \emph {et~al.}(1996)\citenamefont {Salacuse}, \citenamefont {Denton},\ and\ \citenamefont {Egelstaff}}]{salacuse}%
  \BibitemOpen
  \bibfield  {author} {\bibinfo {author} {\bibfnamefont {J.~J.}\ \bibnamefont {Salacuse}}, \bibinfo {author} {\bibfnamefont {A.~R.}\ \bibnamefont {Denton}},\ and\ \bibinfo {author} {\bibfnamefont {P.~A.}\ \bibnamefont {Egelstaff}},\ }\bibfield  {title} {\bibinfo {title} {Finite-size effects in molecular dynamics simulations: Static structure factor and compressibility. {I}. {T}heoretical method},\ }\href {https://doi.org/10.1103/PhysRevE.53.2382} {\bibfield  {journal} {\bibinfo  {journal} {Phys. Rev. E}\ }\textbf {\bibinfo {volume} {53}},\ \bibinfo {pages} {2382} (\bibinfo {year} {1996})}\BibitemShut {NoStop}%
\bibitem [{\citenamefont {Reible}\ \emph {et~al.}(2025{\natexlab{a}})\citenamefont {Reible}, \citenamefont {Hartmann},\ and\ \citenamefont {Delle~Site}}]{advpx}%
  \BibitemOpen
  \bibfield  {author} {\bibinfo {author} {\bibfnamefont {B.~M.}\ \bibnamefont {Reible}}, \bibinfo {author} {\bibfnamefont {C.}~\bibnamefont {Hartmann}},\ and\ \bibinfo {author} {\bibfnamefont {L.}~\bibnamefont {Delle~Site}},\ }\bibfield  {title} {\bibinfo {title} {Finite-size effects in molecular simulations: A physico-mathematical view},\ }\href {https://doi.org/https://doi.org/10.1080/23746149.2025.2495151} {\bibfield  {journal} {\bibinfo  {journal} {Adv. Phys. X}\ }\textbf {\bibinfo {volume} {10}},\ \bibinfo {pages} {2495151} (\bibinfo {year} {2025}{\natexlab{a}})}\BibitemShut {NoStop}%
\bibitem [{\citenamefont {Landau}\ and\ \citenamefont {Lifshitz}(1980)}]{LandauLifshitz5}%
  \BibitemOpen
  \bibfield  {author} {\bibinfo {author} {\bibfnamefont {L.~D.}\ \bibnamefont {Landau}}\ and\ \bibinfo {author} {\bibfnamefont {E.~M.}\ \bibnamefont {Lifshitz}},\ }\href {https://doi.org/10.1016/C2009-0-24487-4} {\emph {\bibinfo {title} {Statistical Physics, Part 1}}},\ Third Edition, Revised and Enlarged by E. M. Lifshitz and L. P. Pitaevskii. Landau and Lifshitz Course of Theoretical Physics Vol. 5\ (\bibinfo  {publisher} {Pergamon Press},\ \bibinfo {address} {Oxford, New York},\ \bibinfo {year} {1980})\BibitemShut {NoStop}%
\bibitem [{\citenamefont {Huang}(1991)}]{Huang1991}%
  \BibitemOpen
  \bibfield  {author} {\bibinfo {author} {\bibfnamefont {K.}~\bibnamefont {Huang}},\ }\href {https://www.wiley.com/en-us/Statistical+Mechanics%2C+2nd+Edition-p-9780471815181} {\emph {\bibinfo {title} {Statistical Mechanics}}},\ \bibinfo {edition} {2nd}\ ed.\ (\bibinfo  {publisher} {John Wiley \& Sons},\ \bibinfo {address} {New York},\ \bibinfo {year} {1991})\BibitemShut {NoStop}%
\bibitem [{\citenamefont {Delle~Site}\ \emph {et~al.}(2017)\citenamefont {Delle~Site}, \citenamefont {Ciccotti},\ and\ \citenamefont {Hartmann}}]{jstat}%
  \BibitemOpen
  \bibfield  {author} {\bibinfo {author} {\bibfnamefont {L.}~\bibnamefont {Delle~Site}}, \bibinfo {author} {\bibfnamefont {G.}~\bibnamefont {Ciccotti}},\ and\ \bibinfo {author} {\bibfnamefont {C.}~\bibnamefont {Hartmann}},\ }\bibfield  {title} {\bibinfo {title} {Partitioning a macroscopic system into independent subsystems},\ }\href {https://doi.org/10.1088/1742-5468/aa75db} {\bibfield  {journal} {\bibinfo  {journal} {J. Stat. Mech.: Theory Exp.}\ }\textbf {\bibinfo {volume} {2017}}\bibinfo  {number} { (8)},\ \bibinfo {pages} {083201}}\BibitemShut {NoStop}%
\bibitem [{\citenamefont {Reible}\ \emph {et~al.}(2022)\citenamefont {Reible}, \citenamefont {Hartmann},\ and\ \citenamefont {Delle~Site}}]{lmp}%
  \BibitemOpen
\bibfield  {number} {  }\bibfield  {author} {\bibinfo {author} {\bibfnamefont {B.~M.}\ \bibnamefont {Reible}}, \bibinfo {author} {\bibfnamefont {C.}~\bibnamefont {Hartmann}},\ and\ \bibinfo {author} {\bibfnamefont {L.}~\bibnamefont {Delle~Site}},\ }\bibfield  {title} {\bibinfo {title} {Two-sided {B}ogoliubov inequality to estimate finite size effects in quantum molecular simulations},\ }\href {https://doi.org/10.1007/s11005-022-01586-3} {\bibfield  {journal} {\bibinfo  {journal} {Lett. Math. Phys.}\ }\textbf {\bibinfo {volume} {112}},\ \bibinfo {pages} {97} (\bibinfo {year} {2022})}\BibitemShut {NoStop}%
\bibitem [{\citenamefont {Reible}\ \emph {et~al.}(2023)\citenamefont {Reible}, \citenamefont {Hille}, \citenamefont {Hartmann},\ and\ \citenamefont {Delle~Site}}]{prr}%
  \BibitemOpen
  \bibfield  {author} {\bibinfo {author} {\bibfnamefont {B.~M.}\ \bibnamefont {Reible}}, \bibinfo {author} {\bibfnamefont {J.~F.}\ \bibnamefont {Hille}}, \bibinfo {author} {\bibfnamefont {C.}~\bibnamefont {Hartmann}},\ and\ \bibinfo {author} {\bibfnamefont {L.}~\bibnamefont {Delle~Site}},\ }\bibfield  {title} {\bibinfo {title} {Finite size effects and thermodynamic accuracy in many-particle systems},\ }\href {https://doi.org/10.1103/PhysRevResearch.5.023156} {\bibfield  {journal} {\bibinfo  {journal} {Phys. Rev. Res.}\ }\textbf {\bibinfo {volume} {5}},\ \bibinfo {pages} {023156} (\bibinfo {year} {2023})}\BibitemShut {NoStop}%
\bibitem [{\citenamefont {Delle~Site}\ and\ \citenamefont {Hartmann}(2024{\natexlab{a}})}]{pra}%
  \BibitemOpen
  \bibfield  {author} {\bibinfo {author} {\bibfnamefont {L.}~\bibnamefont {Delle~Site}}\ and\ \bibinfo {author} {\bibfnamefont {C.}~\bibnamefont {Hartmann}},\ }\bibfield  {title} {\bibinfo {title} {Scaling law for the size dependence of a finite-range quantum gas},\ }\href {https://doi.org/10.1103/PhysRevA.109.022209} {\bibfield  {journal} {\bibinfo  {journal} {Phys. Rev. A}\ }\textbf {\bibinfo {volume} {109}},\ \bibinfo {pages} {022209} (\bibinfo {year} {2024}{\natexlab{a}})}\BibitemShut {NoStop}%
\bibitem [{\citenamefont {van Gunsteren}\ \emph {et~al.}(1978)\citenamefont {van Gunsteren}, \citenamefont {Berendsen},\ and\ \citenamefont {Rullmann}}]{reaction}%
  \BibitemOpen
  \bibfield  {author} {\bibinfo {author} {\bibfnamefont {W.~F.}\ \bibnamefont {van Gunsteren}}, \bibinfo {author} {\bibfnamefont {H.~J.~C.}\ \bibnamefont {Berendsen}},\ and\ \bibinfo {author} {\bibfnamefont {J.~A.~C.}\ \bibnamefont {Rullmann}},\ }\bibfield  {title} {\bibinfo {title} {Inclusion of reaction fields in molecular dynamics. {A}pplication to liquid water},\ }\href {https://doi.org/10.1039/DC9786600058} {\bibfield  {journal} {\bibinfo  {journal} {Faraday Discuss. Chem. Soc.}\ }\textbf {\bibinfo {volume} {66}},\ \bibinfo {pages} {58} (\bibinfo {year} {1978})}\BibitemShut {NoStop}%
\bibitem [{\citenamefont {Delle~Site}\ and\ \citenamefont {Hartmann}(2024{\natexlab{b}})}]{molphys}%
  \BibitemOpen
  \bibfield  {author} {\bibinfo {author} {\bibfnamefont {L.}~\bibnamefont {Delle~Site}}\ and\ \bibinfo {author} {\bibfnamefont {C.}~\bibnamefont {Hartmann}},\ }\bibfield  {title} {\bibinfo {title} {Computationally feasible bounds for the free energy of nonequilibrium steady states, applied to simple models of heat conduction},\ }\href {https://doi.org/10.1080/00268976.2024.2391484} {\bibfield  {journal} {\bibinfo  {journal} {Mol. Phys.}\ }\textbf {\bibinfo {volume} {123}},\ \bibinfo {pages} {e2391484} (\bibinfo {year} {2024}{\natexlab{b}})}\BibitemShut {NoStop}%
\bibitem [{\citenamefont {Widom}(1963)}]{wid}%
  \BibitemOpen
  \bibfield  {author} {\bibinfo {author} {\bibfnamefont {B.}~\bibnamefont {Widom}},\ }\bibfield  {title} {\bibinfo {title} {Some topics in the theory of fluids},\ }\href {https://doi.org/10.1063/1.1734110} {\bibfield  {journal} {\bibinfo  {journal} {J. Chem. Phys.}\ }\textbf {\bibinfo {volume} {39}},\ \bibinfo {pages} {2808} (\bibinfo {year} {1963})}\BibitemShut {NoStop}%
\bibitem [{\citenamefont {Zwanzig}(1954)}]{zwanz}%
  \BibitemOpen
  \bibfield  {author} {\bibinfo {author} {\bibfnamefont {R.~W.}\ \bibnamefont {Zwanzig}},\ }\bibfield  {title} {\bibinfo {title} {High-temperature equation of state by a perturbation method. {I}. {N}onpolar gases},\ }\href {https://doi.org/10.1063/1.1740409} {\bibfield  {journal} {\bibinfo  {journal} {J. Chem. Phys.}\ }\textbf {\bibinfo {volume} {22}},\ \bibinfo {pages} {1420} (\bibinfo {year} {1954})}\BibitemShut {NoStop}%
\bibitem [{\citenamefont {Delle~Site}\ and\ \citenamefont {Djurdjevac}(2024)}]{ana}%
  \BibitemOpen
  \bibfield  {author} {\bibinfo {author} {\bibfnamefont {L.}~\bibnamefont {Delle~Site}}\ and\ \bibinfo {author} {\bibfnamefont {A.}~\bibnamefont {Djurdjevac}},\ }\bibfield  {title} {\bibinfo {title} {An effective {H}amiltonian for the simulation of open quantum molecular systems},\ }\href {https://doi.org/10.1088/1751-8121/ad5088} {\bibfield  {journal} {\bibinfo  {journal} {J. Phys. A: Math. Theor.}\ }\textbf {\bibinfo {volume} {57}},\ \bibinfo {pages} {255002} (\bibinfo {year} {2024})}\BibitemShut {NoStop}%
\bibitem [{\citenamefont {Reible}\ \emph {et~al.}(2025{\natexlab{b}})\citenamefont {Reible}, \citenamefont {Djurdjevac},\ and\ \citenamefont {Delle~Site}}]{apq}%
  \BibitemOpen
  \bibfield  {author} {\bibinfo {author} {\bibfnamefont {B.~M.}\ \bibnamefont {Reible}}, \bibinfo {author} {\bibfnamefont {A.}~\bibnamefont {Djurdjevac}},\ and\ \bibinfo {author} {\bibfnamefont {L.}~\bibnamefont {Delle~Site}},\ }\bibfield  {title} {\bibinfo {title} {Chemical potential and variable number of particles control the quantum state: Quantum oscillators as a showcase},\ }\href {https://doi.org/https://doi.org/10.1063/5.0251102} {\bibfield  {journal} {\bibinfo  {journal} {APL Quantum}\ }\textbf {\bibinfo {volume} {2}},\ \bibinfo {pages} {016124} (\bibinfo {year} {2025}{\natexlab{b}})}\BibitemShut {NoStop}%
\bibitem [{\citenamefont {Reible}\ and\ \citenamefont {Delle~Site}(2025)}]{prelind}%
  \BibitemOpen
  \bibfield  {author} {\bibinfo {author} {\bibfnamefont {B.~M.}\ \bibnamefont {Reible}}\ and\ \bibinfo {author} {\bibfnamefont {L.}~\bibnamefont {Delle~Site}},\ }\bibfield  {title} {\bibinfo {title} {Open quantum systems and the grand canonical ensemble},\ }\href {https://doi.org/https://doi.org/10.1103/631r-2y71} {\bibfield  {journal} {\bibinfo  {journal} {Phys. Rev. E}\ }\textbf {\bibinfo {volume} {112}},\ \bibinfo {pages} {024130} (\bibinfo {year} {2025})}\BibitemShut {NoStop}%
\bibitem [{\citenamefont {Plato}(2023)}]{Plato2023}%
  \BibitemOpen
  \bibfield  {author} {\bibinfo {author} {\bibfnamefont {R.}~\bibnamefont {Plato}},\ }\href {https://doi.org/10.1007/978-3-662-66570-1} {\emph {\bibinfo {title} {Basiswissen Numerik}}}\ (\bibinfo  {publisher} {Springer Spektrum},\ \bibinfo {address} {Berlin, Heidelberg},\ \bibinfo {year} {2023})\BibitemShut {NoStop}%
\bibitem [{\citenamefont {Kuo}\ and\ \citenamefont {Sloan}(2005)}]{Kuo2005}%
  \BibitemOpen
  \bibfield  {author} {\bibinfo {author} {\bibfnamefont {F.}~\bibnamefont {Kuo}}\ and\ \bibinfo {author} {\bibfnamefont {I.}~\bibnamefont {Sloan}},\ }\bibfield  {title} {\bibinfo {title} {Lifting the curse of dimensionality},\ }\href {https://www.ams.org/journals/notices/200511/fea-sloan.pdf?adat=December%202005&trk=200511fea-sloan&cat=feature&galt=feature} {\bibfield  {journal} {\bibinfo  {journal} {Notices of the AMS}\ }\textbf {\bibinfo {volume} {52}},\ \bibinfo {pages} {1320} (\bibinfo {year} {2005})}\BibitemShut {NoStop}%
\bibitem [{\citenamefont {Mathai}\ \emph {et~al.}(1999)\citenamefont {Mathai}, \citenamefont {Moschopoulos},\ and\ \citenamefont {Pederzoli}}]{Mathai1999}%
  \BibitemOpen
  \bibfield  {author} {\bibinfo {author} {\bibfnamefont {A.~M.}\ \bibnamefont {Mathai}}, \bibinfo {author} {\bibfnamefont {P.}~\bibnamefont {Moschopoulos}},\ and\ \bibinfo {author} {\bibfnamefont {G.}~\bibnamefont {Pederzoli}},\ }\bibfield  {title} {\bibinfo {title} {Distance between random points in a cube},\ }\href {https://doi.org/10.6092/issn.1973-2201/1104} {\bibfield  {journal} {\bibinfo  {journal} {Statistica}\ }\textbf {\bibinfo {volume} {59}},\ \bibinfo {pages} {61} (\bibinfo {year} {1999})}\BibitemShut {NoStop}%
\bibitem [{\citenamefont {Žilinskas}(2003)}]{Zilinskas2003}%
  \BibitemOpen
  \bibfield  {author} {\bibinfo {author} {\bibfnamefont {A.}~\bibnamefont {Žilinskas}},\ }\bibfield  {title} {\bibinfo {title} {On the distribution of the distance between two points in a cube},\ }\href {https://doi.org/10.1515/156939703322003962} {\bibfield  {journal} {\bibinfo  {journal} {Random Oper. and Stoch. Equ.}\ }\textbf {\bibinfo {volume} {11}},\ \bibinfo {pages} {21} (\bibinfo {year} {2003})}\BibitemShut {NoStop}%
\bibitem [{\citenamefont {Philip}(2007)}]{Philip2007}%
  \BibitemOpen
  \bibfield  {author} {\bibinfo {author} {\bibfnamefont {J.}~\bibnamefont {Philip}},\ }\href {https://people.kth.se/~johanph/habc.pdf} {\bibinfo {title} {The probability distribution of the distance between two random points in a box [{U}npublished manuscript]}} (\bibinfo {year} {2007}),\ \bibinfo {note} {{D}epartment of {M}athematics, {KTH}. Retrieved from the author's university weg page}\BibitemShut {NoStop}%
\bibitem [{\citenamefont {Müller-Gronbach}\ \emph {et~al.}(2012)\citenamefont {Müller-Gronbach}, \citenamefont {Novak},\ and\ \citenamefont {Ritter}}]{MonteCarlo}%
  \BibitemOpen
  \bibfield  {author} {\bibinfo {author} {\bibfnamefont {T.}~\bibnamefont {Müller-Gronbach}}, \bibinfo {author} {\bibfnamefont {E.}~\bibnamefont {Novak}},\ and\ \bibinfo {author} {\bibfnamefont {K.}~\bibnamefont {Ritter}},\ }\href {https://doi.org/10.1007/978-3-540-89141-3} {\emph {\bibinfo {title} {Monte Carlo-Algorithmen}}}\ (\bibinfo  {publisher} {Springer Berlin},\ \bibinfo {address} {Heidelberg},\ \bibinfo {year} {2012})\BibitemShut {NoStop}%
\bibitem [{\citenamefont {Arseniev}\ \emph {et~al.}(2018)\citenamefont {Arseniev}, \citenamefont {Ivanov},\ and\ \citenamefont {Korenevsky}}]{MonteCarloFormula}%
  \BibitemOpen
  \bibfield  {author} {\bibinfo {author} {\bibfnamefont {D.~G.}\ \bibnamefont {Arseniev}}, \bibinfo {author} {\bibfnamefont {V.~M.}\ \bibnamefont {Ivanov}},\ and\ \bibinfo {author} {\bibfnamefont {M.~L.}\ \bibnamefont {Korenevsky}},\ }\href {https://doi.org/doi:10.1515/9783110554632} {\emph {\bibinfo {title} {Adaptive Stochastic Methods In Computational Mathematics and Mechanics}}}\ (\bibinfo  {publisher} {De Gruyter},\ \bibinfo {address} {Berlin, Boston},\ \bibinfo {year} {2018})\BibitemShut {NoStop}%
\bibitem [{\citenamefont {Doliwa}\ and\ \citenamefont {Heuer}(2003)}]{Doliwa2003}%
  \BibitemOpen
  \bibfield  {author} {\bibinfo {author} {\bibfnamefont {B.}~\bibnamefont {Doliwa}}\ and\ \bibinfo {author} {\bibfnamefont {A.}~\bibnamefont {Heuer}},\ }\bibfield  {title} {\bibinfo {title} {Finite-size effects in a supercooled liquid},\ }\href {https://doi.org/10.1088/0953-8984/15/11/309} {\bibfield  {journal} {\bibinfo  {journal} {J. Phys.: Condens. Matter.}\ }\textbf {\bibinfo {volume} {15}},\ \bibinfo {pages} {S849} (\bibinfo {year} {2003})}\BibitemShut {NoStop}%
\bibitem [{\citenamefont {Banerjee}\ \emph {et~al.}(2022)\citenamefont {Banerjee}, \citenamefont {Sevilla}, \citenamefont {Rudzinski},\ and\ \citenamefont {Cortes-Huerto}}]{Banerjee2022}%
  \BibitemOpen
  \bibfield  {author} {\bibinfo {author} {\bibfnamefont {A.}~\bibnamefont {Banerjee}}, \bibinfo {author} {\bibfnamefont {M.}~\bibnamefont {Sevilla}}, \bibinfo {author} {\bibfnamefont {J.~F.}\ \bibnamefont {Rudzinski}},\ and\ \bibinfo {author} {\bibfnamefont {R.}~\bibnamefont {Cortes-Huerto}},\ }\bibfield  {title} {\bibinfo {title} {Finite-size scaling and thermodynamics of model supercooled liquids: long-range concentration fluctuations and the role of attractive interactions},\ }\href {https://doi.org/10.1039/D2SM00089J} {\bibfield  {journal} {\bibinfo  {journal} {Soft Matter}\ }\textbf {\bibinfo {volume} {18}},\ \bibinfo {pages} {2373} (\bibinfo {year} {2022})}\BibitemShut {NoStop}%
\bibitem [{\citenamefont {Teschl}(2014)}]{Teschl2014}%
  \BibitemOpen
  \bibfield  {author} {\bibinfo {author} {\bibfnamefont {G.}~\bibnamefont {Teschl}},\ }\href {http://www.ams.org/bookpages/gsm-157} {\emph {\bibinfo {title} {Mathematical Methods in Quantum Mechanics}}},\ \bibinfo {edition} {2nd}\ ed.,\ \bibinfo {series} {Graduate Studies in Mathematics}\ No.\ \bibinfo {number} {157}\ (\bibinfo  {publisher} {American Mathematical Society},\ \bibinfo {address} {Providence, RI},\ \bibinfo {year} {2014})\BibitemShut {NoStop}%
\bibitem [{\citenamefont {Ponomarev}(1987)}]{Ponomarev1987}%
  \BibitemOpen
  \bibfield  {author} {\bibinfo {author} {\bibfnamefont {S.~P.}\ \bibnamefont {Ponomarev}},\ }\bibfield  {title} {\bibinfo {title} {Submersions and preimages of sets of measure zero},\ }\href {https://doi.org/10.1007/BF00970225} {\bibfield  {journal} {\bibinfo  {journal} {Sib. Math. J.}\ }\textbf {\bibinfo {volume} {28}},\ \bibinfo {pages} {153–163} (\bibinfo {year} {1987})}\BibitemShut {NoStop}%
\bibitem [{\citenamefont {Ponomarev}(1995)}]{Ponomarev1995}%
  \BibitemOpen
  \bibfield  {author} {\bibinfo {author} {\bibfnamefont {S.~P.}\ \bibnamefont {Ponomarev}},\ }\bibfield  {title} {\bibinfo {title} {The {$N^{-1}$}-property of maps and {L}uzin's condition {$(N)$}},\ }\href {https://doi.org/10.1007/BF02304773} {\bibfield  {journal} {\bibinfo  {journal} {Math. Notes}\ }\textbf {\bibinfo {volume} {58}},\ \bibinfo {pages} {960–965} (\bibinfo {year} {1995})}\BibitemShut {NoStop}%
\bibitem [{\citenamefont {Wolfram{\ }Research{,\ }Inc.}()}]{Mathematica}%
  \BibitemOpen
  \bibfield  {author} {\bibinfo {author} {\bibnamefont {Wolfram{\ }Research{,\ }Inc.}},\ }\href {https://www.wolfram.com/mathematica} {\bibinfo {title} {Mathematica, {V}ersion 14.2}},\ \bibinfo {note} {{C}hampaign, IL, 2024}\BibitemShut {NoStop}%
\bibitem [{\citenamefont {Zwicker{\ }Group}()}]{Math2Python}%
  \BibitemOpen
  \bibfield  {author} {\bibinfo {author} {\bibnamefont {Zwicker{\ }Group}},\ }\href {https://github.com/zwicker-group/MathematicaToPython} {\bibinfo {title} {Mathematica{T}o{P}ython, {V}ersion 0.2}},\ \bibinfo {note} {{G}it{H}ub {R}epository, Nov. 11, 2022}\BibitemShut {NoStop}%
\bibitem [{\citenamefont {Harris}\ \emph {et~al.}(2020)\citenamefont {Harris}, \citenamefont {Millman}, \citenamefont {van~der Walt} \emph {et~al.}}]{NumPyPaper}%
  \BibitemOpen
  \bibfield  {author} {\bibinfo {author} {\bibfnamefont {C.~R.}\ \bibnamefont {Harris}}, \bibinfo {author} {\bibfnamefont {K.~J.}\ \bibnamefont {Millman}}, \bibinfo {author} {\bibfnamefont {S.~J.}\ \bibnamefont {van~der Walt}}, \emph {et~al.},\ }\bibfield  {title} {\bibinfo {title} {Array programming with {NumPy}},\ }\href {https://doi.org/10.1038/s41586-020-2649-2} {\bibfield  {journal} {\bibinfo  {journal} {Nature}\ }\textbf {\bibinfo {volume} {585}},\ \bibinfo {pages} {357} (\bibinfo {year} {2020})}\BibitemShut {NoStop}%
\end{thebibliography}%

\end{document}